\def\tit{Multivariate sparse interpolation using
randomized Kronecker substitutions}
\renewcommand{\backref}[1]{Referenced on %
    \expandarg\StrCount{#1}{,}[\ncommas]%
    \ifthenelse{\ncommas = 0}{page #1}%
    {pages \StrBefore[\ncommas]{#1}{,}\ and\StrBehind[\ncommas]{#1}{,}}%
.}
\newcommand{\doi}[1]{doi: \href{http://dx.doi.org/#1}{\path{#1}}}
\newcommand{\ZZ}{\ensuremath{\mathbb{Z}}}
\newcommand{\F}{\ensuremath{\mathsf{R}}}
\newcommand{\x}{\ensuremath{\mathbf{x}}}
\newcommand{\alvec}{\ensuremath{\boldsymbol\alpha}}
\newcommand{\btvec}{\ensuremath{\boldsymbol\beta}}
\newcommand{\rr}{\ensuremath{\mathbf{r}}}
\newcommand{\s}{\ensuremath{\mathbf{s}}}
\newcommand{\zvec}{\ensuremath{\mathbf{0}}}
\newcommand{\R}{\ensuremath{\mathsf{R}}}
\newcommand{\dd}{\ensuremath{\mathbf{d}}}
\newcommand{\e}{\ensuremath{\mathbf{e}}}
\newcommand{\HH}{\ensuremath{\mathcal{H}}}
\newcommand{\FF}{\ensuremath{\mathbb{F}}}
\newcommand{\bnd}[2]{\ensuremath{#1\left(#2\right)}}
\newcommand{\oh}[1]{\bnd{O}{#1}}
\newcommand{\softoh}[1]{\bnd{\widetilde{O}}{#1}}
\newcommand{\ceil}[1]{\ensuremath{\left\lceil{#1}\right\rceil}}
\DeclareMathOperator{\lc}{lc}
\title{\tit}
\author{Andrew Arnold\\
\small Cheriton School of Computer Science\\
\small University of Waterloo\\
\small Waterloo, Ontario, Canada\\
\href{https://cs.uwaterloo.ca/~a4arnold/}{\tt a4arnold@uwaterloo.ca}
\and
Daniel S.\ Roche\\
\small Computer Science Department\\
\small United States Naval Academy\\
\small Annapolis, Maryland, USA\\
\href{http://www.usna.edu/cs/roche/}{\tt roche@usna.edu}}
\newtheorem{theorem}{Theorem}[section]
\newtheorem{lemma}[theorem]{Lemma}
\newtheorem{cor}[theorem]{Corollary}
\newtheorem{definition}[theorem]{Definition}
\numberwithin{equation}{section}
\begin{document}
\maketitle

\begin{abstract}
  We present new techniques for reducing a multivariate
  sparse polynomial to a univariate polynomial. 
  The reduction works similarly to the classical and
  widely-used Kronecker substitution, except that we choose
  the degrees randomly based on the number of nonzero terms
  in the multivariate polynomial.
  The resulting univariate polynomial often has a significantly
  lower degree than the Kronecker substitution polynomial,
  at the expense of a small number of term collisions.
  As an application,
  we give a new algorithm for multivariate interpolation
  which uses these new techniques
  along with any existing univariate interpolation algorithm.
\end{abstract}

\section{Introduction}

We consider the problem of determining the coefficients and
exponents of an unknown sparse
multivariate polynomial, given a ``black box'' procedure that allows for
its evaluation at any chosen point.

Our new technique is a variant on the classical Kronecker substitution.
Say $f$ is an $n$-variate polynomial with max degree less
than $D$ and coefficients in a ring $\R$. The Kronecker substitution
produces a univariate polynomial $g\in\R[z]$ by substituting powers of
$z$ in the evaluation of $f$:
$$g(z) = f\left(z, z^D, z^{D^2},\ldots, z^{D^{n-1}}\right).$$
This map is invertible because there is a one-to-one correspondence
between terms in $f$ and in $g$, but the price of such 
convenience is an exponential increase in the degree.

For example, consider the following bivariate polynomial:
$$f = 3x^9 y - 2x^5y^4 -y^6 + x^2y^9.$$
The standard Kronecker substitution would be
$$g = f(z, z^{10}) = 3z^{19} - 2z^{45} - z^{60} + z^{92}.$$
Every term in $g$ comes from a single term in $f$, and the original exponents
can be determined by the base-$D$ expansion of exponents in $g$.

Our randomized Kronecker substitution is also a map from multivariate
to univariate polynomials obtained by evaluating 
at powers of a single indeterminate. We choose $n$ integers
$(s_1,\ldots,s_n)$ at random and perform the substitution 
$g(z) = f(z^{s_1}, \ldots, z^{s_n})$. When these integers are not too
large (in particular, if each $s_i < D^{n-1}$), the degree of $g$ will
be less than in the usual Kronecker substitution. 

The price of a decreased degree is that the map is no longer invertible,
for two reasons.
First, we may have two or more distinct terms
in $f$ converge to a single term in $g$. This is called a
\emph{collision}.
The second difficulty is that the original multivariate exponents cannot
be determined directly from the terms in a single substitution $g$.
We will show how performing $O(n+\log \#f)$ such random substitutions
can overcome both difficulties.

In the example above, choose $s_1=5$ and $s_2=2$,
so that
$$g_{5,2}(z) =  f(z^5, z^2) = -z^{12} + z^{28} - 2z^{29} + 3z^{47}.$$
In this case $\#g = \#f$, so there were no term collisions, even
though their order has changed. The advantage is that the degree
is significantly less than that from the usual Kronecker substitution.

Choosing instead $s_1=2$ and $s_2=5$, the result is
$$g_{2,5}(z) = f(z^2, z^5) = 3z^{23} - 3z^{30} + z^{49},$$
which again has a reduced degree, but in this case we have a
\emph{collision}: The two terms $-2x^5y^4-y^6$ in $f$ collided to
produce a single term $-3z^{30}$ under this substitution.

Nonetheless, for the two terms not involved in a collision,
both images can be used to recover the original terms in $f$.
The two terms with coefficient $3$ in
the substitutions are
$3z^{47}$ and $3z^{23}$. This produces the linear system
$$
\begin{bmatrix}5&2\\ 2&5\end{bmatrix}
\begin{bmatrix}u\\ v\end{bmatrix}
=
\begin{bmatrix}47\\ 23\end{bmatrix},
$$
which is solved to reveal 
exponents $u=9$ and $v=1$ of the original term $3x^9y$.

Crucial to our success is determining a suitable set of
integers from which to choose the $s_i$.
Section~\ref{sec:subs} provides
bounds to construct such sets with provably many ``good'' choices
which will not produce many collisions. Recovering the original terms
requires a way to correlate like terms in the images $g$, and
Section~\ref{sec:diverse} shows how a separate randomization 
allows the coefficients to be used to identify like terms in separate
images $g$.
Section~\ref{sec:interp} describes a multivariate interpolation
algorithm that makes use of these randomizations.

\section{Related work}\label{sec:related}

Polynomial interpolation dates from the 18th century, where the goal was
to discover a polynomial approximation to an
unknown function by collecting sufficiently many observations.
It is often beneficial to search for a \emph{sparse}
model, where only a bounded number of terms in the polynomial are
nonzero (see, e.g., \citet{CRT06}).

Our setting is more restrictive in two senses. First, we require the
ability to choose the points at which the unknown polynomial is
evaluated. In this \emph{black-box} interpolation setting, 
the polynomial a procedure which can be probed
with any desired input, at some cost (see, e.g., \citet{GKS90,Man95,GLL09}).

We also require that a sparse polynomial truly exists ``inside the box'',
and all evaluations --- even those with numerical noise --- come from
the same $T$-sparse polynomial. This is different from the general
setting above where the true function need not be $T$-sparse or indeed
a polynomial at all. (Allowing for outliers as in \citet{CKP12} provides
another option.)

Algorithmic progress in sparse interpolation can be divided
into two categories depending on the complexity.
Algorithms with polynomial dependence on the partial degree bounds,
sparsity bound, and number of variables are called \emph{sparse} algorithms
\citep{Zip79,Zip90,HR99}.
Those with only logarithmic dependence on the degree are \emph{supersparse}
and can be
useful even in the case of univariate polynomials \citet{BT88,AKP06}.

As shown in the introduction, Kronecker subtitution \citep{Kro82} can turn
a univariate interpolation into a multivariate one, by
evaluating the multivariate polynomial at high powers of the univariate
interpolation points.  This
is especially effective in conjunction with supersparse
algorithms \citep{Kal10a}.

Zippel's sparse interpolation method 
proceeds one variable at a time and uses randomization to identify 
which portions of the unknown sparse polynomial vanish.
This is another way to turn any univariate algorithm into a
multivariate one \citet{KL03}.

There is recent work on implementations of sparse interpolation
\citet{JM10,HL14} and applications to GCD and factorization
\citep{JM09,BL12}. The Kronecker substitution in particular has been
applied to integer and polynomial multiplication
\citep{Sch82,Har09a}.

Sparse interpolation has a strong connection to the important
theoretical problem of polynomial identity testing (see \citet{SY10}
for a survey).  Of particular
interest is \citet{KS01}, where a similar technique to ours
is used for identity testing and interpolation.
A significant advantage of their algorithms 
is the very small number of random bits required for the computation.

Our randomized Kronecker substitution provides yet
another way to transform any univariate
interpolation algorithm into a multivariate one.
Table~\ref{table:compare} summarizes our contribution as compared to
\citet{KS01}, Zippel's method, and the normal Kronecker substitution.
The table shows, for each method,
how many times a univariate interpolation algorithm must be called, and
the degree of the univariate polynomials that must be interpolated,
relative to a given bound $D$ on the max degree of the unknown
multivariate polynomial, and the number of variables $n$. (The sparsity
bound for the univariate algorithms will always be simply $T$.)

\begin{table}[htbp]\caption{Uni-to-multivariate
interpolation methods\label{table:compare}}
\begin{center}
\begin{tabular}{c|c|c}
& \# of reductions & Degree\\
\hline
Kronecker substitution & 1 & $D^n$ \\
Zippel & $nT$ & $D$ \\
Klivans \& Spielman & $n$ & $O(n^2 T^2 D)$ \\
This paper ($n=2$ case) & $O(\log T)$ & \oh{\sqrt{T} D \log D} \\
This paper ($n\ge 3$ case) & $O(n+\log T)$ & \oh{TD}
\end{tabular}
\end{center}
\end{table}

The total cost of a multivariate interpolation algorithm obtained this
way depends on the choice of univariate interpolation algorithm.
Writing $D_1$ for the univariate degree (the second column of
Table~\ref{table:compare}), there are two options: dense methods
which require $D_1$ black box probes and $\softoh{D_1}$ field operations,
or supersparse methods based on Ben-Or and Tiwari which require 
only $O(T)$ probes and $\softoh{T\log^2 D_1}$ field operations.
In Table~\ref{table:comp2}
we compare two standard approaches
against our own: Kronecker substitution with a supersparse univariate algorithm,
and Zippel's method with a dense univariate algorithm.

\begin{table}[htbp]\caption{Overall multivariate interpolation\label{table:comp2}}
\begin{center}
\begin{tabular}{c|c|c|c}
reduction & univariate alg. &
\# of probes & \# of field ops \\
\hline
Kronecker & supersparse & $2T$ & $\softoh{n^2T\log^2 D}$ \\
Ours & supersparse & $\softoh{nT}$ & $\softoh{n T \log^2 D}$ \\[2 mm]
Zippel & dense & $nTD$ & $\softoh{nTD}$ \\
Ours ($n=2)$ & dense & $\softoh{\sqrt{T} D}$ & $\softoh{\sqrt{T} D}$ \\
Ours ($n\ge 3)$ & dense & $\softoh{nTD}$ & $\softoh{nTD}$
\end{tabular}
\end{center}
\end{table}

\section{Randomized substitutions}\label{sec:subs}

For an unknown polynomial $f \in \F[x_1,\ldots,x_n]$, 
our main technical
contribution is a way of choosing integers
$s_1,\ldots,s_n \in \ZZ$ such that the substitution
$g = f\left(z^{s_1},\ldots,z^{s_n}\right)$
results in a lower degree than the usual Kronecker substitution,
while probably not introducing too many term collisions.

\subsection{Bivariate substitutions}

We begin with the case of $n=2$ variables, where our result is
stronger than the general case and we always choose the random
substitution exponents $s_1, s_2$ to be prime numbers.
Bivariate polynomials naturally
constitute a large portion of multivariate polynomials of interest, and
they correspond to the important case of converting between polynomials
in $\ZZ[x]$ and multiple-precision integers \citep{Sch82,Har09a}. 

Throughout this subsection, we assume $f\in\F[x,y]$ is an unknown
bivariate polynomial, written as
\begin{equation}\label{eqn:f-bivar}
  f = a_1 x^{u_1}y^{v_1} + a_2 x^{u_2}y^{v_2} + \cdots + a_t
  x^{u_t}y^{v_t}.
\end{equation}
We further assume upper bounds $D_x, D_y$ on the $\deg_x f$ and
$\deg_y f$, respectively, and $T \ge t$ on the number of nonzero terms
$\#f$.

The general idea here is to perform the substitution
\begin{equation}\label{eqn:rk-bivar}
g(z) = f(z^p, z^q)
\end{equation}
for random chosen prime numbers $p$ and $q$. 
We want to choose $p$ and $q$ as small as possible, so as to minimize
$\deg g$, but large enough so that there are not too
many collisions.

Our approach to choosing primes is based on the following technical lemma, which
shows how to guarantee a high
probability success while minimizing the degree of $g$.

\begin{lemma}\label{lem:lambda-pq}
  Let $f\in\F[x,y]$ with partial degrees less than $D_x, D_y$ and at most
  than $T$ nonzero terms, 
  $0 < \mu < 1$ be a chosen bound on the probability of failure,
  and $1\le i \le T$ be the index of a nonzero term in $f$. Define
  $$
    B = \frac{25 \left(T-1\right) \ln D_x \ln D_y}{9 \mu},
  $$
  $$
    \lambda_p = \max\left(20.5, \sqrt{\frac{B D_y}{D_x}}\right)
      \quad \text{and} \quad
    \lambda_q = \max\left(20.5, \sqrt{\frac{B D_x}{D_y}}\right)
  $$

  By choosing primes $p,q$ uniformly at random from the ranges
  $[\lambda_p,2\lambda_p]$ and
  $[\lambda_q,2\lambda_q]$ respectively, 
  the probability that the $i$th term of $f$ collides
  in $f(z^p, z^q)$ is less than $\mu$.
\end{lemma}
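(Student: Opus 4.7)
My plan is to fix an index $j\neq i$, bound the probability that the $j$th term collides with the $i$th term in $g(z) = f(z^p, z^q)$, and apply a union bound over the $T-1$ other indices. Writing $\alpha = u_i - u_j$ and $\beta = v_j - v_i$, the collision condition is the single equation $p\alpha = q\beta$, subject to $|\alpha| < D_x$, $|\beta| < D_y$, and $(\alpha,\beta)\neq(0,0)$. A quick check shows that if either $\alpha$ or $\beta$ vanishes the other must too, so I may assume both are nonzero throughout.

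The main argument handles the case $p\neq q$. Since $p$ is prime and divides $q\beta$ with $\gcd(p,q)=1$, necessarily $p\mid\beta$; symmetrically $q\mid\alpha$. The number of primes in $[\lambda_p,2\lambda_p]$ that divide $\beta$ is at most $\lfloor\log_{\lambda_p}|\beta|\rfloor < \ln D_y/\ln\lambda_p$, while the total number of primes in that interval is at least $\tfrac{3}{5}\lambda_p/\ln\lambda_p$ by an explicit Rosser-type lower bound on $\pi(2x)-\pi(x)$, which is available precisely because the floor $\lambda_p\ge 20.5$ is built into the definition. Hence $\Pr[p\mid\beta]\le \tfrac{5}{3}\ln D_y/\lambda_p$, and by the independence of $p$ and $q$,
\[
 \Pr[\,p\mid\beta \text{ and } q\mid\alpha\,]\;\le\;\frac{25\,\ln D_x\,\ln D_y}{9\,\lambda_p\lambda_q}.
\]
The definitions of $\lambda_p$ and $\lambda_q$ are calibrated so that $\lambda_p\lambda_q\ge B$ in every case---exactly $B$ when both realize their $\sqrt{\cdot}$ arguments, and a short subcase calculation when one of the two is pinned at $20.5$. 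Summing the per-term bound over the $T-1$ other indices therefore gives $\tfrac{25(T-1)\ln D_x\ln D_y}{9\,\lambda_p\lambda_q}\le\mu$.

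The leftover case $p=q$ forces $\alpha=\beta$, and its contribution to the collision probability is at most the size of the intersection of the two prime sets divided by the product of their sizes---a quantity polynomially smaller than the main estimate. Strict inequality ``$<\mu$'' rather than ``$\le\mu$'' comes from the slack in the floor $\lfloor\ln|\beta|/\ln\lambda_p\rfloor$ together with the strict bound $|\beta|<D_y$, plus any residual slack in the prime-counting estimate. The main obstacle is invoking a sharp explicit lower bound on $\pi(2x)-\pi(x)$ that is valid from $x=20.5$ onward and yields the constant $3/5$; this is exactly what produces the constants $25$ and $9$ appearing in $B$. Once that numerical ingredient is secured, what remains is the union bound over $j$ and the max/min calculation certifying $\lambda_p\lambda_q\ge B$, both of which are routine.
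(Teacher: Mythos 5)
Your proposal is correct and takes essentially the same approach as the paper: both rely on the Rosser--Schoenfeld lower bound for $\pi(2x)-\pi(x)$, the same divisibility characterization ($p\mid(v_i-v_j)$ and $q\mid(u_i-u_j)$), and the same calibration $\lambda_p\lambda_q\ge B$. The only organizational difference is that you phrase it as a per-$j$ probability bound (using independence of $p$ and $q$) followed by a union bound over $j$, while the paper does a single global count of bad pairs via a nested divisibility-counting argument; the resulting bound $\frac{25(T-1)\ln D_x \ln D_y}{9\lambda_p\lambda_q}$ is the same in both cases.

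One point worth flagging: you correctly notice that the implication ``$p\alpha=q\beta$ $\Rightarrow$ $p\mid\beta$ and $q\mid\alpha$'' requires $p\neq q$, an edge case the paper silently ignores (it is a live possibility when $D_x=D_y$, since then the two prime ranges coincide). Your resolution of this case is vague, though --- asserting that the extra contribution is ``polynomially smaller'' and absorbed by unquantified slack is not a proof, and when neither $\lambda$ is pinned at $20.5$ the main estimate can come arbitrarily close to $\mu$, leaving no obvious room. A cleaner fix would be to exclude $p=q$ by construction (e.g.\ choose $q$ from the range with $p$ removed) or to bound $\Pr[p=q]$ explicitly and fold it into a slightly sharpened constant. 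Since the paper's own proof has the identical gap, this is not a defect of your approach relative to the reference, but it is the one place your write-up falls short of rigor.
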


\begin{proof}
  The primes $p$ and $q$ constitute the random choices in this discussion.
  We say a pair $(p,q)$ is \emph{bad} if the term $a_ix^{u_i}y^{v_i}$ collides
  with any other term in $f(z^p, z^q)$. We will show
  that the number of bad pairs
  $(p,q)$ is at most $\mu$ times the total number of prime pairs that
  could be chosen.

  We begin with a simple lower bound on the latter quantity.
  Applying equation (3.8) in
  \citet{RS62} twice shows that the total number of ordered
  pairs $(p,q)$ is at least
  \begin{equation}\label{eqn:pq-rhs}
    \frac{9\lambda_p \lambda_q}{25\ln\lambda_p \ln\lambda_q}.
  \end{equation}

  Next we obtain an upper bound on
  the number of bad pairs by counting the total number of times the $i$th
  term collides in
  every possible $f(z^p,z^q)$.

  Observe that for any set $S$ of nonzero integers, and any bound $\lambda$,
  the number of times any prime $p \ge \lambda$ divides any of the
  integers in $S$ is at most
  \begin{equation}\label{eqn:div-count}
    \frac{|S| \ln\max(S)}{\ln \lambda}.
  \end{equation}

  A collision with the $i$th term occurs whenever
  $u_i p + v_i q = u_j p + v_j q$ for some $j\ne i$, $1\le j \le T$.
  This happens only when $p\divides(v_i-v_j)$ and $q\divides(u_i-u_j)$. Since
  $i\ne j$, these exponent differences $(u_i-u_j)$ and $(v_i-v_j)$ cannot
  both be zero. Furthermore, if one exponent difference is zero, then the
  collision can never occur.

  Therefore all collisions occur at indices in the set
  $$J = \{j \mid 1\le j \le T \text{ and } (u_i-u_j)(v_i-v_j)\ne 0\}.$$
  The total number of times the $i$th term collides in any $f(z^p,z^q)$
  is equal to the sum over all $j\in J$ of the number of pairs $(p,q)$
  such that $p\divides(v_i-v_j)$ and $q\divides(u_i-u_j)$.

  Now define, for each $p$, the subset of \emph{possible} collision
  indices as 
  $J_p = \{j\in J \mid p \text{ divides } (v_i-v_j)\}.$

  As each $(v_i-v_j)<D_y$, we have from \eqref{eqn:div-count} that
  $$\sum_{p\ge\lambda_p} \#J_p \le (T-1)\ln D_y/\ln \lambda_p.$$

  For each prime $p$, the total number of times the $i$th term
  collides is the number of indices $j\in J_p$ such that
  $q \divides (u_i - u_j)$. As each of these differences is
  less than $D_x$, using \eqref{eqn:div-count} again this sum
  is at most $(\#J_p \cdot \ln D_x)/(\ln \lambda_q)$.

  Therefore the total number of times the $i$th term collides is
  at most
  \begin{equation*}%
    \sum_{p\ge\lambda_p}\frac{\#J_p \ln D_x}{\ln \lambda_q}
    \le \frac{(T-1)\ln D_x \ln D_y}{\ln \lambda_p \ln \lambda_q}.
  \end{equation*}

  Using the definition of $B$ and the observation that
  $\lambda_p \lambda_q = B$, we can rewrite this bound as
  $$
    \frac{9\mu}{25\ln \lambda_p \ln \lambda_q} B = 
    \mu \frac{9\lambda_p \lambda_q}{25 \ln\lambda_p \ln\lambda_q},
  $$
  which is exactly $\mu$ times \eqref{eqn:pq-rhs}. Hence the probability
  of choosing a bad pair $(p,q)$ 
  is at most $\mu$.
\end{proof}

Choosing primes $p,q$ from such sets provides a good bound on the
degree of the resulting polynomial $g$.

\begin{cor}\label{cor:deg-bivar}
Let $f\in\F[x,y]$ with partial degrees less than $D_x, D_y$ and
at most $T$ nonzero terms.

Then for any constant error probability $0<\mu<1$, and
primes $p,q$ chosen randomly as in Lemma~\ref{lem:lambda-pq},
the substitution polynomial 
$g(z) = f(z^p, z^q)$
has degree at most 
$$O(\sqrt{T} \sqrt{D_x D_y} \log (D_x D_y)).$$
\end{cor}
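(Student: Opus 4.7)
The plan is to bound $\deg g$ directly from the degree bounds on $f$ and then substitute the explicit values of $\lambda_p$ and $\lambda_q$ given in Lemma~\ref{lem:lambda-pq}. The only real work is verifying that the algebra matches the claimed asymptotic form.

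First, I would observe that since $\deg_x f < D_x$ and $\deg_y f < D_y$, every term $a_j x^{u_j}y^{v_j}$ of $f$ maps under the substitution $x\mapsto z^p$, $y\mapsto z^q$ to a monomial of degree $u_j p + v_j q < pD_x + qD_y$. Hence deterministically
\[
\deg g \;\le\; p D_x + q D_y.
\]
Because $p$ and $q$ are chosen in Lemma~\ref{lem:lambda-pq} from $[\lambda_p,2\lambda_p]$ and $[\lambda_q,2\lambda_q]$, we get the unconditional bound $\deg g \le 2\lambda_p D_x + 2\lambda_q D_y$, and the rest of the proof reduces to bounding these two terms.

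Next I would plug in the formulas. From $\lambda_p = \max(20.5,\sqrt{BD_y/D_x})$ we obtain $\lambda_p D_x = \max(20.5\,D_x,\sqrt{B\,D_x D_y})$, and symmetrically $\lambda_q D_y = \max(20.5\,D_y,\sqrt{B\,D_x D_y})$. Therefore
\[
\deg g \;=\; \bnd{O}{\sqrt{B\,D_x D_y} + D_x + D_y}.
\]
Substituting $B = 25(T-1)\ln D_x \ln D_y/(9\mu)$ with $\mu$ constant gives $\sqrt{B} = \bnd{O}{\sqrt{T\,\ln D_x \ln D_y}}$, so
\[
\sqrt{B\,D_x D_y} \;=\; \bnd{O}{\sqrt{T\,D_x D_y}\cdot\sqrt{\ln D_x \ln D_y}}.
\]
Finally, AM--GM yields $\sqrt{\ln D_x \ln D_y} \le \tfrac{1}{2}(\ln D_x + \ln D_y) = \tfrac{1}{2}\ln(D_xD_y)$, so $\sqrt{B\,D_x D_y}$ is $\bnd{O}{\sqrt{T}\sqrt{D_xD_y}\log(D_xD_y)}$, which dominates the $D_x + D_y$ terms.

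There is no real obstacle here; the statement is essentially a bookkeeping corollary of the parameter choices in Lemma~\ref{lem:lambda-pq}. The only mildly delicate point is ensuring that the $\max$ with the constant $20.5$ inside $\lambda_p,\lambda_q$ is absorbed into the $O$-constants (which it clearly is, since when $\sqrt{BD_y/D_x} < 20.5$ the factor $\lambda_p D_x$ contributes only $O(D_x)$, and similarly for $\lambda_q D_y$), and that the constant failure probability $\mu$ is likewise hidden in the $O$.
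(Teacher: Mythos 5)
Your approach is the intended one --- the paper states this corollary without any proof, and the natural derivation is exactly what you carry out: bound $\deg g$ by $2\lambda_p D_x + 2\lambda_q D_y$ and substitute the formulas for $\lambda_p$, $\lambda_q$, and $B$ from Lemma~\ref{lem:lambda-pq}. Up through the intermediate bound $\deg g = O(\sqrt{B\,D_x D_y} + D_x + D_y)$ the argument is clean and correct.

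The final step --- ``which dominates the $D_x+D_y$ terms,'' and your parenthetical remark that the $\max$ with $20.5$ is ``clearly'' absorbed --- is not actually clear, and indeed can fail. Take $T=2$ and $D_y$ a fixed small constant. As $D_x\to\infty$, $\sqrt{B D_y/D_x}\to 0$, so eventually $\lambda_p = 20.5$ and the term $2\lambda_p D_x = 41 D_x$ grows linearly, while $\sqrt{T}\sqrt{D_x D_y}\log(D_x D_y) = \Theta(\sqrt{D_x}\log D_x)$ grows only like a square root. So in this regime $D_x$ is not dominated, and the corollary's stated bound is violated. What your computation genuinely establishes is the more honest bound
\[
\deg g \;=\; O\!\left(\sqrt{T}\,\sqrt{D_x D_y}\,\log(D_x D_y) \;+\; D_x + D_y\right),
\]
and the extra $D_x + D_y$ can only be dropped under an implicit balance condition --- roughly, that $B \ge 420.25\,\max(D_x,D_y)/\min(D_x,D_y)$, which forces both $\max$'s in the definitions of $\lambda_p,\lambda_q$ to be attained by the square-root branch. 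This imprecision lies in the corollary's statement as much as in your proof, but you should not present the absorption as obvious: either retain the $D_x+D_y$ term explicitly, or state the balance hypothesis under which it is subsumed.
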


The degree of a standard Kronecker substitution is $D_x D_y$.
Because $T$ is always less than this, the randomized substitution
will never have degree
more than a logarithmic factor larger than $D_x D_y$. The benefit
comes when the polynomial is sparse, i.e., $T \ll D_x D_y$, in which
case the randomized substitution has much smaller degree, albeit at
the expense of a few collisions.

\subsection{Multivariate substitutions}

When $f$ has at least 3 variables, the preceding  analysis 
no longer applies. The new difficulty is that potentially-colliding
terms could have exponents that differ in two \emph{or more} variables, meaning
that the simple divisibility conditions are no longer sufficient to identify
every possible collision. Consequently, our randomly-chosen exponents
in this case will be somewhat larger, and not necessarily prime.

For this subsection,
$f \in \F[x_1,\ldots,x_n]$ is an unknown $n$-variate polynomial, written as
\begin{equation}\label{eqn:f-mvar}
  f = a_1 \x^{\e_1} + a_2 \x^{\e_2} + \cdots + a_t \x^{\e_t},
\end{equation}
where $\x = (x_1,\ldots,x_n)$ and each $\e_i \in \ZZ^n$.
$D$ and $T$ are upper bounds on the max degree and number of terms in
$f$.

Our general approach here is to choose a random vector
$\s = (s_1,s_2,\ldots,s_n)$ of integers below a certain bound,
and then perform the substitution
$g(z) = f(z^{s_1}, \dots, z^{s_n})$.

The following lemma, similar in purpose to Lemma~\ref{lem:lambda-pq},
shows how large the integers in $\s$ must be in order to guarantee 
a small likelihood of collisions.

\begin{lemma}\label{lem:lambda}
  Let $f\in\F[x_1,\ldots,x_n]$ with max degree less than $D$ and 
  at most $T$ nonzero terms,
  $0 < \mu < 1$ be a chosen bound on the failure probability,
  and $1\le i \le T$ be the index of a nonzero term in $f$. 
  Define $\lambda$ to be the least prime number satisfying
  $\lambda \ge T/\mu$.

  If integers $s_1,\ldots,s_n$ are chosen uniformly at random
  from $[0,\lambda-1]$, then 
  the probability that the $i$th term of $f$ collides in 
  $f(z^{s_1},\ldots,z^{s_n})$ is less than $\mu$.
\end{lemma}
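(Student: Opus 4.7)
The plan is to fix the term index $i$ and apply a union bound over the other nonzero terms of $f$. Two distinct terms $a_i\x^{\e_i}$ and $a_j\x^{\e_j}$ collide in $g(z)=f(z^{s_1},\ldots,z^{s_n})$ precisely when $\e_i\cdot\s=\e_j\cdot\s$, or equivalently when $\dd^{(j)}\cdot\s=0$ in $\ZZ$, where $\dd^{(j)}:=\e_i-\e_j$. Since distinct terms of $f$ carry distinct exponent vectors, $\dd^{(j)}$ is a nonzero vector in $\ZZ^n$ whenever $j\ne i$, and so at least one coordinate $d_k^{(j)}$ is nonzero.

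First I would bound the probability of a single pairwise collision. Fixing such a $j$ and any coordinate $k$ with $d_k^{(j)}\ne 0$, and conditioning on the other $n-1$ components of $\s$, the constraint $\dd^{(j)}\cdot\s=0$ reduces to a single linear equation $d_k^{(j)}\,s_k=-\sum_{\ell\ne k}d_\ell^{(j)}s_\ell$ in one unknown, which has at most one integer solution. Since $s_k$ is independent of the other coordinates and uniform on $\{0,1,\ldots,\lambda-1\}$, this conditional probability is at most $1/\lambda$; by the law of total probability, $\Pr[\dd^{(j)}\cdot\s=0]\le 1/\lambda$ unconditionally.

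Then I would union-bound over the at most $T-1$ other nonzero terms to conclude that the probability the $i$th term collides with anything in $g$ is at most $(T-1)/\lambda$. Plugging in $\lambda\ge T/\mu$ gives the required bound $(T-1)/\lambda\le (T-1)\mu/T<\mu$.

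The main ``obstacle'' is really a conceptual sanity check rather than a hard technical step: the argument above works entirely over $\ZZ$ and does not actually require $\lambda$ to be prime or to exceed $D$. One could alternatively pass to $\FF_\lambda$ and apply a Schwartz--Zippel-type bound to the nonzero linear form $\dd^{(j)}\cdot\s\bmod\lambda$, but this variant needs the extra hypothesis $\lambda>D$ to guarantee that $\dd^{(j)}$ does not vanish modulo $\lambda$. The primality of $\lambda$ in the statement is presumably imposed for convenience in later stages of the interpolation algorithm, where arithmetic in $\FF_\lambda$ is used.
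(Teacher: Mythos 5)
Your proof is correct, and it takes a genuinely different route from the paper's. The paper passes immediately to the finite field $\FF_\lambda$ and argues that a collision forces $\s$ to lie in an $(n-1)$-dimensional subspace of $\FF_\lambda^n$; to make this work it must first handle the possibility that $\dd_j \equiv \zvec \pmod{\lambda}$, which it does by factoring out the largest power $\lambda^\ell$ dividing every entry of $\dd_j$ and observing $\dd_j\cdot\s=0$ iff $(\dd_j/\lambda^\ell)\cdot\s=0$. That device relies on $\lambda$ being prime, both for the field structure and to make the null space argument go through, and it lets the paper avoid any hypothesis like $\lambda>D$. Your argument instead conditions on $n-1$ of the coordinates of $\s$ and observes that, over $\ZZ$, the remaining linear equation in one unknown has at most one solution because the chosen coefficient $d_k^{(j)}$ is a nonzero integer; the bound $1/\lambda$ then follows from uniformity of $s_k$ on a set of size $\lambda$. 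This is more elementary, avoids the $\lambda^\ell$ trick entirely, and --- as you correctly note --- does not use primality of $\lambda$ at all, nor any relation between $\lambda$ and $D$. The primality hypothesis in the statement is therefore not needed for this lemma under your argument; the paper's proof does use it, and primality is also exploited later (Lemma~\ref{lem:linsys}) where substitution vectors are treated as elements of $\FF_\lambda^n$ and full-rank arguments over a field are required, so keeping $\lambda$ prime in the statement is a reasonable global convention even though your proof shows it is locally superfluous.
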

\begin{proof}
  Adopt the notation $\FF_\lambda$ for the finite field with $\lambda$
  elements, which we will represent as $\ZZ/\lambda\ZZ$. Write
  $\s = (s_1,\ldots,s_n)$ for the randomly-chosen vector in
  $\FF_\lambda^n$.

  Now let $1\le j \le T$, $j\ne i$, and consider the $j$th term of $f$.
  Writing $\e_i, \e_j$ for the exponent vectors of these terms as in 
  \eqref{eqn:f-mvar}, define $\dd_j = \e_i - \e_j$, which cannot be
  the zero vector as $i$ and $j$ are distinct terms.
  We see that these two terms collide in the
  substitution $f(z^{s_1},\ldots,z^{s_n})$ if and only if
  $\dd_j \cdot \s = 0$.

  Define $\ell \ge
  0$ such that $\lambda^\ell$ is the largest power of $\lambda$ that
  divides every entry in $\dd_j$, and write $\dd_j' =
  \dd_j/\lambda^\ell$. This means that $\dd_j' \in \ZZ^n$
  and $\dd_j' \bmod \lambda \ne \zvec$. Furthermore, $\dd_j\cdot\s = 0$
  if and only if $\dd_j' \cdot \s = 0$.

  Now, if $\dd_j' \cdot \s = 0$, then this also holds modulo $\lambda$,
  so $\s$ must lie in the $(n-1)$-dimension null space of
  $\dd_j' \bmod \lambda$, call it $W$, where $W \subset \FF_\lambda^n$.
  The probability that $\s \in W$ is $1/\lambda$, and therefore the
  probability that terms $i$ and $j$ collide is at most
  $1/\lambda$ as well.

  Since there are $T-1$ terms that the $i$th term could collide with, 
  i.e., $T-1$ choices for $j$, the
  probability that the $i$th term collides with any other term is at
  most $(T-1)/\lambda$. From the definition of $\lambda$, this is less than $\mu$.
\end{proof}

Due to Bertrand's postulate, we will have
$\lambda < 2T/\mu$.
The following corollary shows how this bound on the size of entries in
the randomly-chosen $\s$ affects the degree of the reduced univariate
polynomial. Compared to $D^n$,
the degree of the univariate polynomial from the
usual Kronecker substitution, we see a significant reduction when
$T \ll D^{n-1}$.

\begin{cor}
  Let $f\in\F[x_1,\ldots,x_n]$ with max degree less than $D$ and
  at most $T$ nonzero terms.

  For any constant error probability $0<\mu<1$, and
  integers $s_1,\ldots, s_n$ chosen randomly as in Lemma~\ref{lem:lambda},
  the polynomial
  $g(z) = f\left(z^{s_1}, z^{s_2},\ldots, z^{s_n}\right)$
  has degree at most $\oh{TD}$.
\end{cor}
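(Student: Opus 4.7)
The plan is to bound $\deg g$ directly by combining the range restriction on each $s_i$ from Lemma~\ref{lem:lambda} with the hypothesis on the degree of $f$. There is nothing probabilistic to prove here: the corollary is just a deterministic bookkeeping consequence of how small we have forced the entries of $\s$ to be. All of the technical work already happened in Lemma~\ref{lem:lambda}.

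First I would write out the substitution term by term. Each monomial $\x^{\e_j}$ in $f$, with $\e_j = (e_{j1},\ldots,e_{jn})$, maps to the univariate monomial $z^{\s \cdot \e_j}$ under $g(z) = f(z^{s_1},\ldots,z^{s_n})$. Hence
\[
  \deg g \;\le\; \max_{1 \le j \le t}\, \s \cdot \e_j
  \;=\; \max_{1 \le j \le t}\, \sum_{i=1}^n s_i e_{ji}.
\]

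Next I would bound the inner sum. Since each $s_i$ is drawn from $[0,\lambda-1]$, we have $\s \cdot \e_j \le \lambda \sum_i e_{ji}$. Reading ``max degree less than $D$'' as a bound on the (total) degree of $f$, so that $\sum_i e_{ji} < D$ for every $j$, this yields $\deg g < \lambda D$. Finally I would invoke the estimate $\lambda < 2T/\mu$ recorded in the paragraph immediately preceding the corollary (Bertrand's postulate applied to the least prime at least $T/\mu$), giving $\deg g < 2TD/\mu$, which is $\oh{TD}$ for constant $\mu$.

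There is no genuine obstacle: the argument is a three-line chain of inequalities. The only thing that requires care is matching the notion of ``max degree'' used in the statement with the bound being pulled through the inner product; everything else is direct substitution of the bounds that have already been established.
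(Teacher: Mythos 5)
Your argument is the only sensible one here and it is essentially correct: the paper gives no proof of this corollary, treating it as immediate bookkeeping, which matches your instinct.  The chain $\deg g \le \max_j \s\cdot\e_j$, $s_i < \lambda$, $\lambda < 2T/\mu$ by Bertrand, and $\mu$ constant is exactly the intended computation, and you are right that nothing probabilistic is happening.

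The one place where I would push back is your reading of ``max degree.''  You flag the interpretation as total degree and move on, but that reading is in tension with the rest of the paper.  In the introduction the example polynomial $f = 3x^9y - 2x^5y^4 - y^6 + x^2y^9$ has total degree~$11$ yet is treated with $D=10$, which is only consistent if $D$ bounds the \emph{partial} degrees; and Procedure~\ref{proc:interp} explicitly writes ``partial degrees less than $D$'' as the hypothesis matching ``max degree less than $D$'' from the theorems.  Under the max-partial-degree reading, the bound you get from $\s\cdot\e_j = \sum_i s_i e_{ji}$ is $\deg g < n(\lambda-1)D$, i.e.\ $\oh{nTD}$, with an extra factor of $n$ that the corollary does not display.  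So either the corollary is silently absorbing $n$ into the big-$O$ (plausible given that Table~\ref{table:compare} already reports $\oh{TD}$), or there is a small slip in the stated bound.  Your proof is sound for the total-degree reading, but it is worth noting explicitly that this is the \emph{stronger} hypothesis, and that the weaker (and apparently intended) partial-degree hypothesis gives $\oh{nTD}$ instead; otherwise a reader could miss that the stated $\oh{TD}$ quietly requires a choice of interpretation that the rest of the paper does not uniformly support.
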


\section{Multivariate diversification}
\label{sec:diverse}

Consider $f\in\F[x_1,\ldots,x_n]$ as in \eqref{eqn:f-mvar}:
$$
  f = a_1 \x^{\e_1} + a_2 \x^{\e_2} + \cdots + a_t \x^{\e_t}.
$$

Each choice of $\s = (s_1,\ldots,s_n)$ 
for a randomized Kronecker substitution maps $f$ to
a univariate polynomial $g(z) = f(z^{s_1},\ldots,z^{s_n})$. In order
to recover the exponent tuples of the original polynomial $f$, it will
be necessary in the next section to perform multiple such substitutions
and correlate terms in each $g$ that correspond to the same unknown term
in $f$.

The notion of \emph{diversification}, introduced by \citet{GR11a}, will
be used to correlate terms in the substituted polynomials $g$. The basic
idea is that distinct terms in $f$ will be made, through a
randomization, to have distinct coefficients.

In fact, as there will be some small number of collisions in each
substituted polynomial $g$, we require the notion of \emph{generalized
diversity} from \cite{AGR14}. The idea is that not only must the terms in $f$
have distinct coefficients, but some small number of \emph{sums} of
terms in $f$ must additionally be distinct.

This problem of diversification 
is to choose $\alpha$ from a suitable set so that, with high
probability, $\alpha$ is not a root of any in a set $\HH$ of polynomials.
In the original notion of diversity, $\HH$ simply consists of the set of
$\tfrac{T(T-1)}{2}$ pairwise term differences from $f$. To achieve
generalized diversity, we must also consider polynomials $h$ of the form
$$h = \sum_{i \in S} a_i \x^{\e_i} - \sum_{j \in S'}a_j\x^{\e_j},$$
where $S$ and $S'$ each either comprise a single term or a set of terms appearing in a collision.

\begin{definition}[Diversifying set]\label{def:divset}
  Let $n\ge 1$ and bounds $D$, $m$, and $\mu$ be given, and let
  $\HH\subseteq\F[x_1,\ldots,x_n]$ be an unknown subset of 
  nonzero polynomials satisfying $\#\HH\le m$ and 
  with each $h\in\HH$ having max degree less than $D$.

  We say $A\subseteq\F$ is a $(n,D,m,\mu)$-\emph{diversifying set} if
  the probability is less than $\mu$ that
  any evaluation point $\alvec$, with entries chosen at random from $A$,
  is a root of \emph{any} of the $h\in\HH$. That is,
  $$
    \Pr_{\alvec\in A^n}\left[h(\alvec) \ne 0\ \forall h\in\HH\right] \ge
    1-\mu.
  $$
\end{definition}

From the discussion above, we see that the set $\HH$
the differences between any of the $t \leq T$ single terms and
any of the $r$ sets of collisions, which is at most
$\#H < \tfrac{1}{2}(T+r)^2$.

\begin{lemma}
  Let $f\in\F[x_1,\ldots,x_n]$ with degree less than $D$ and at most $T$
  nonzero terms, and $0<\mu<1$.
  If there are at most $r$ collisions in some set of
  evaluations of $f$, and $\alvec\in\F^n$ is chosen at random from a set
  $A\subseteq\F$ that is a 
  \mbox{$(n,D,\tfrac{1}{2}(T+r)^2,\mu)$-diversifying} set, then, with probability at
  least $1-\mu$, every coefficient of $f(\alpha_1 x_1,\ldots, \alpha_n
  x_n)$ is distinct from every other coefficient and from the
  coefficients of all $r$ collisions.
\end{lemma}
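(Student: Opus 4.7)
The plan is to recast the distinctness requirement as a nonvanishing statement for a small family $\HH$ of auxiliary multivariate polynomials and then invoke the diversifying-set property directly. Under the substitution $\x\mapsto(\alpha_1 x_1,\ldots,\alpha_n x_n)$, the $i$th term of $f$ contributes coefficient $a_i\alvec^{\e_i}$, while the $\ell$th collision contributes $\sum_{k\in S_\ell}a_k\alvec^{\e_k}$ after diversification. There are at most $T+r$ such quantities whose pairwise distinctness we need to guarantee.

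For every unordered pair of these objects I would form the polynomial $h$ equal to the difference of their symbolic expressions: $a_i\x^{\e_i}-a_j\x^{\e_j}$ for two single terms, $a_i\x^{\e_i}-\sum_{k\in S_\ell}a_k\x^{\e_k}$ for a term and a collision, or $\sum_{i\in S_\ell}a_i\x^{\e_i}-\sum_{j\in S_m}a_j\x^{\e_j}$ for two distinct collisions. Each such $h$ is a $\F$-linear combination of monomials drawn from $f$ with coefficients $\pm a_k$ (indices appearing on both sides cancel); since the exponents $\e_1,\ldots,\e_t$ are pairwise distinct, every $a_k\ne 0$, and the symmetric difference of the two index sets is nonempty by construction, $h$ is a nonzero polynomial of max degree less than $D$. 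Collect these into $\HH$; then $\#\HH\le\binom{T+r}{2}<\tfrac{1}{2}(T+r)^2$, matching the parameter $m$ in the diversifying-set hypothesis.

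Finally, Definition~\ref{def:divset} tells us that a random $\alvec\in A^n$ satisfies $h(\alvec)\ne 0$ simultaneously for every $h\in\HH$ with probability at least $1-\mu$. Each such nonvanishing is exactly the statement that the two corresponding coefficients (be they single-term or collision aggregates) are unequal, which is the conclusion of the lemma.

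The only subtle step, and the one I would double-check, is the nonzeroness of $h$ in the term-vs-collision and collision-vs-collision cases. This reduces to the observations that distinct collision sets have nonempty symmetric difference and that the monomials $\x^{\e_1},\ldots,\x^{\e_t}$ of $f$ are linearly independent with nonzero coefficients $a_1,\ldots,a_t$; everything else is bookkeeping of a counting bound and an appeal to the definition.
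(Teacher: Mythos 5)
Your argument is correct and follows exactly the paper's intended reasoning: the paper's proof is just the one-line remark ``Follows from the discussion above and the definition of a diversifying set,'' where that preceding discussion constructs the same set $\HH$ of at most $\tfrac{1}{2}(T+r)^2$ nonzero pairwise differences of the $\le T$ single-term monomials and $\le r$ collision sums. You have simply spelled out the bookkeeping and the nonvanishing check that the paper leaves implicit; the one small caveat (two collisions in different images could involve the identical index set, making the difference zero) is harmless since such a pair is simply omitted from $\HH$, and you already flagged this.
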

\begin{proof}Follows from the discussion above and the definition of a
diversifying set.
\end{proof}

From the definition of diversifying set, a simple application of the
Schwartz-Zippel lemma could be used to generate diversifying sets as
long as the field $\F$ is sufficiently large.
Theorems 3.1 and 4.6 in \citet{GR11a} define diversifying sets for large
finite fields and fixed-precision complex numbers, respectively, in the univariate
case $n=1$. Our more recent work in \citet{AGR14} constructs
smaller diversifying sets by choosing \emph{vectors} of substitutions,
again for the univariate case. We restate these results in our current
notation and refer the reader to the aforementioned results for further
details.

\begin{cor}[\citet{GR11a}, Theorem 3.1]
  Let bounds $D$, $m$, and $\mu$ be given. If $q$ is a prime power
  satisfying $q \ge mD/\mu$, then the set $\FF_q^*$ of all nonzero
  elements in the finite field of size $q$ is a
  $(1,D,m,\mu)$-diversifying set.
\end{cor}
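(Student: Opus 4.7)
The plan is a short union-bound argument combining two standard ingredients: the fact that a nonzero univariate polynomial over a field has no more roots than its degree, and the size $q-1$ of $\FF_q^*$. Since $n=1$, the random evaluation point $\alvec$ is just a single element $\alpha$ drawn uniformly from $\FF_q^*$, and what I need to show is that the probability $\alpha$ is a root of some $h\in\HH$ is at most $\mu$.

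First I would fix an arbitrary admissible family $\HH$, i.e., at most $m$ nonzero polynomials in $\FF_q[x]$ each of degree less than $D$. For each such $h$, working over the field $\FF_q$, there are at most $D-1$ elements of $\FF_q$ at which $h$ vanishes, hence at most $D-1$ roots in $\FF_q^*$. I would then apply the union bound over the (at most) $m$ polynomials in $\HH$: the set of \emph{bad} $\alpha \in \FF_q^*$ that are a root of at least one $h\in\HH$ has cardinality at most $m(D-1)$.

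The final step is to divide by $|\FF_q^*| = q-1$ and check the numerical inequality. The failure probability is at most $m(D-1)/(q-1)$, and I would verify that the hypothesis $q \ge mD/\mu$, together with $m\ge 1$ and $0<\mu<1$, implies $q-1 \ge m(D-1)/\mu$, which is equivalent to $m(D-1)/(q-1) \le \mu$. There is no genuine obstacle here; the only thing to be careful about is that the $-1$ appearing in $q-1$ is absorbed by the $-1$ appearing in $D-1$ using $m/\mu \ge 1$, so the slightly optimistic-looking hypothesis $q \ge mD/\mu$ (rather than $q \ge mD/\mu + 1$) is in fact sufficient.
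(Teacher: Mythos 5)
Your argument is correct, and it is the standard (essentially the only natural) one: each nonzero $h$ of degree less than $D$ has at most $D-1$ roots in $\FF_q$, a union bound over the at most $m$ polynomials in $\HH$ gives at most $m(D-1)$ bad choices of $\alpha$ in $\FF_q^*$, and dividing by $|\FF_q^*| = q-1$ together with $q \ge mD/\mu$ and $m/\mu \ge 1$ yields a failure probability at most $\mu$. Your remark about the $-1$ being absorbed is the right thing to check and is handled correctly. Note that the paper does not supply its own proof of this corollary --- it is cited verbatim from Giesbrecht and Roche (2011), Theorem~3.1 --- so there is no in-paper argument to compare against, but your proof is what the cited theorem amounts to in this setting.
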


\citet{GR11a} also considers the case when $f \in \mathbb{C}[x]$, $f$ is
given by a numerical black box. Their proof does not apply here as the
polynomials in $\HH$ for us are not always binomials. We hope that a
similar result would hold for multivariate diversification, but do not
consider the question here.

\begin{cor}[\citet{AGR14}, Lemma 4.1]
  Let bounds $D$, $m$, and $\mu$ be given as above, $q$ a prime
  power, and set
  $$
    s = \ceil{\log_q (2D+1)} 
    \quad\text{and}\quad
    k = \ceil{\log\tfrac{2}{\mu} + 2\log m}.
  $$
  Then the set $\FF_{q^s}^k$ of $k$-tuples from a size-$s$ extension of
  the finite field with $q$ elements is a
  $(1,D,m,\mu)$-diversifying set%
  \footnote{In this case we make the abuse of notation that each
  evaluation $f(\alpha x)$ is actually a $k$-tuple of evaluations,
  and the coefficients in $f(\alpha x)$ are actually $k$-tuples in
  $\FF_{q^s}$.}%
  .
\end{cor}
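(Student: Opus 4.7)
The plan is to reduce the problem to single-variable Schwartz--Zippel and then amplify by independence across the $k$ coordinates. First I would unpack the footnote: under the stated abuse of notation, a ``coefficient'' in $f(\alpha x)$ is really a $k$-tuple in $\FF_{q^s}$, so the event that some nonzero $h\in\HH$ vanishes at $\alvec = (\alpha_1,\ldots,\alpha_k) \in \FF_{q^s}^k$ is precisely the event that $h(\alpha_i) = 0$ for \emph{every} $i = 1,\ldots,k$, with the $\alpha_i$ drawn independently and uniformly from $\FF_{q^s}$.

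Second I would bound the per-polynomial collision probability. A nonzero univariate polynomial of degree less than $D$ has at most $D-1$ roots in any field, and the choice $s = \ceil{\log_q(2D+1)}$ forces $q^s \ge 2D+1$, so for each coordinate,
$$
  \Pr_{\alpha_i \in \FF_{q^s}}\!\left[h(\alpha_i) = 0\right] \;\le\; \frac{D-1}{2D+1} \;<\; \frac{1}{2}.
$$
By independence of the $\alpha_i$ across the $k$ coordinates, the probability that $h(\alvec)$ is the zero $k$-tuple is at most $2^{-k}$. A union bound over the (at most) $m$ polynomials in $\HH$ then gives total failure probability at most $m\cdot 2^{-k}$. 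Finally, plugging in the prescribed $k \ge \log(2/\mu) + 2\log m$ yields $2^k \ge 2m^2/\mu$, so the total failure probability is bounded by $\mu/(2m) \le \mu$, as required.

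The main obstacle is essentially notational rather than technical: one has to internalize that the ``diversifying set'' $\FF_{q^s}^k$ encodes $k$ independent evaluation points and that collisions of $k$-tuple coefficients correspond to \emph{coordinatewise} vanishing of some $h\in\HH$. Once this is set up, the remaining argument is a routine Schwartz--Zippel plus independence plus union bound calculation; the slack in the constants (the extra $\log m$ in the formula for $k$) presumably exists because \citet{AGR14} reuses the bound inside a larger union bound, but it is not needed in the core argument.
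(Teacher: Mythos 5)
The paper does not prove this corollary; it restates Lemma~4.1 of \citet{AGR14} in the current notation and explicitly refers the reader there for details, so there is no in-paper proof to compare against. Your derivation is correct and is the natural one: with $q^s \ge 2D+1$ and $\deg h \le D-1$, a single uniform $\alpha_i \in \FF_{q^s}$ is a root of a fixed nonzero $h$ with probability at most $(D-1)/(2D+1) < 1/2$; independence across the $k$ coordinates brings this to $2^{-k}$; and a union bound over the at most $m$ members of $\HH$ gives failure probability at most $m\,2^{-k} \le \mu/(2m) \le \mu$. The slack you note is real and, as you surmise, is an artifact of how the bound is consumed in \citet{AGR14} rather than a sign of a missing factor here. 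Your argument is also the same independence-amplification idea that the paper goes on to state abstractly in its subsequent vectorization lemma ($A^\ell$ is $\mu^\ell$-diversifying when $A$ is $\mu$-diversifying), specialized to $A = \FF_{q^s}$ with the base case handled by Schwartz--Zippel. One small point worth making explicit: the root-counting step uses that $\FF_{q^s}$ is a field (an integral domain), so that a nonzero polynomial of degree $\le D-1$ has at most $D-1$ roots; this is automatic here but would not hold over a general ring $\R$.
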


We show, more generally, that vectorization may be applied to any diversifying set.

\begin{lemma}
  Let $n, D, m, \mu$ be given and $\HH$ an unknown set of polynomials
  as in Definition~\ref{def:divset}.  If $A \subset \F$ is a 
  $(n,D,m,\mu)$-diversifying set, then $A^\ell \subset \F^\ell$ is a 
  $(n,D,m,\mu^\ell)$-diversifying set, where addition and multiplication in $\F^\ell$ are component-wise.
\end{lemma}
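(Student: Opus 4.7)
The plan is to view a random $\alvec \in (A^\ell)^n$ as bundling together $\ell$ mutually independent random draws from $A^n$, one per coordinate of the direct product ring $\F^\ell$. First I would write $\alvec = (\alpha_1,\ldots,\alpha_n)$ with each $\alpha_i = (\alpha_i^{(1)},\ldots,\alpha_i^{(\ell)}) \in A^\ell$, and for $k = 1,\ldots,\ell$ define the auxiliary vector $\alvec^{(k)} = (\alpha_1^{(k)},\ldots,\alpha_n^{(k)}) \in A^n$. Since every component of each $\alpha_i$ is chosen independently and uniformly from $A$, the vectors $\alvec^{(1)},\ldots,\alvec^{(\ell)}$ are mutually independent and each is uniform on $A^n$.

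Next I would unpack the coefficient ring. Coefficients of $h \in \HH$ embed diagonally into $\F^\ell$, and addition and multiplication in $\F^\ell$ are componentwise, so evaluation decouples across coordinates: $h(\alvec) = \bigl(h(\alvec^{(1)}),\,\ldots,\,h(\alvec^{(\ell)})\bigr)$. Hence $h(\alvec) = \zvec$ in $\F^\ell$ if and only if $h(\alvec^{(k)}) = 0$ in $\F$ for every $k \in \{1,\ldots,\ell\}$.

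The bound then reduces to a containment-plus-independence argument. Let $E_k$ denote the event that \emph{some} polynomial in $\HH$ vanishes at $\alvec^{(k)}$; by the diversifying hypothesis on $A$, each $\Pr[E_k] < \mu$. The failure event for $A^\ell$ is that there exists an $h \in \HH$ with $h(\alvec) = \zvec$, which forces that \emph{same} $h$ to vanish on every $\alvec^{(k)}$; this event is therefore contained in $E_1 \cap \cdots \cap E_\ell$, which by independence of the $\alvec^{(k)}$ has probability strictly less than $\mu^\ell$. That gives exactly the $(n,D,m,\mu^\ell)$-diversifying property claimed for $A^\ell$.

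The only spot demanding care is the quantifier swap in the last step: a single $h$ witnessing failure over $\F^\ell$ must serve simultaneously for all $\ell$ coordinates, whereas the events $E_k$ allow a different witness $h$ per coordinate. The inclusion of events therefore only goes one way, but that is the direction we need; no further union bound over $\HH$ is required, because the union has already been absorbed into the diversifying hypothesis $\Pr[E_k] < \mu$.
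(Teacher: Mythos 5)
Your proof is correct and follows essentially the same route as the paper: decompose the evaluation componentwise via $h(\alvec) = (h(\alvec^{(1)}),\ldots,h(\alvec^{(\ell)}))$ and then exploit independence of the $\ell$ coordinate draws. Your treatment of the existential quantifier over $h$ (noting that a single failure witness $h$ must vanish in every coordinate, so the failure event is contained in $E_1\cap\cdots\cap E_\ell$, avoiding any union bound over $\HH$) is actually a bit more careful than the paper's wording, which is phrased as if bounding the vanishing probability of one fixed $h$ at a time.
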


\begin{proof}
As $A$ is a $(1,D,m,\mu)$-diversifying set, then by definition, a
randomly selected row vector $\alvec \in A^n$ satisfies $h(\alvec) \neq
0$ for all $h \in \HH$ with probability at least $1-\mu$.  Suppose
$\btvec=(\alvec_1,\ldots, \alvec_\ell)$ 
is chosen randomly from $A^{\ell \times n}$, and note that
$h(\btvec) = \big(h(\alvec_1),\ldots, h(\alvec_\ell)\big)$.
Thus the probability that $h(\btvec)=\zvec$ is the probability that 
$h(\alvec_i)=0$ for every $i=1,\ldots,\ell$, which is at most $\mu^\ell$.
\end{proof}

Rather than rehash the univariate diversification procedures, we refer
the reader to the aforementioned results and provide the following
connection which shows that univariate diversifying sets, with success
probability scaled by a factor of $n$, become multivariate diversifying
sets.

\begin{theorem}\label{thm:mult-diverse}
  Let $\F$ be an integral domain and $n, D, m, \mu$ be given and $\HH$ an unknown set of polynomials
  as in Definition~\ref{def:divset}.  If $A \subset \F$ is a 
  $(1,D,m,\mu/n)$-diversifying set, then $A$ is also a 
  $(n,D,m,\mu)$-diversifying set.
\end{theorem}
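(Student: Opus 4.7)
The plan is to prove the statement by induction on $n$, establishing the slightly more general claim that if $A$ is a $(1,D,m,\mu')$-diversifying set over an integral domain $\F$, then $A$ is an $(n,D,m,n\mu')$-diversifying set; the theorem then follows by setting $\mu' = \mu/n$. The base case $n=1$ is immediate from the hypothesis.

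For the inductive step, suppose the claim holds for $n-1$ variables, and let $\HH \subseteq \F[x_1,\ldots,x_n]$ be an arbitrary set of at most $m$ nonzero polynomials of max degree less than $D$. The main idea is to peel off one variable at a time using the integral-domain structure. For each $h \in \HH$, view $h$ as a polynomial in $x_1$ with coefficients in $\F[x_2,\ldots,x_n]$, and let $h^\ast(x_2,\ldots,x_n)$ denote its leading coefficient in $x_1$. Since $h$ is nonzero and $\F$ is an integral domain, $h^\ast$ is a well-defined nonzero $(n-1)$-variate polynomial of max degree less than $D$. Let $\HH^\ast = \{h^\ast : h \in \HH\}$, which has size at most $m$. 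The inductive hypothesis, applied to $\HH^\ast$, yields that a uniformly random $(\alpha_2,\ldots,\alpha_n) \in A^{n-1}$ satisfies $h^\ast(\alpha_2,\ldots,\alpha_n) \neq 0$ for every $h \in \HH$ with probability at least $1-(n-1)\mu'$.

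Conditioning on this event, every polynomial $h(x_1,\alpha_2,\ldots,\alpha_n)$ is a nonzero element of $\F[x_1]$ of degree less than $D$, because its leading coefficient in $x_1$ is the nonzero value $h^\ast(\alpha_2,\ldots,\alpha_n) \in \F$. This produces a set of at most $m$ nonzero univariate polynomials of degree less than $D$, to which the univariate $(1,D,m,\mu')$-diversifying property of $A$ applies directly: a uniformly random $\alpha_1 \in A$ is a root of none of them with probability at least $1-\mu'$. A union bound across the two stages gives an overall failure probability of at most $(n-1)\mu' + \mu' = n\mu'$, completing the induction.

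The main obstacle is essentially bookkeeping: ensuring that the leading-coefficient-in-$x_1$ reduction preserves both nonzeroness and the degree bound, and that $\HH^\ast$ still has size at most $m$. The integral domain assumption is used precisely in the conditioning step to guarantee that the univariate polynomial $h(x_1,\alpha_2,\ldots,\alpha_n)$ is genuinely nonzero once its leading coefficient does not vanish, so that the univariate diversifying hypothesis can legitimately be invoked.
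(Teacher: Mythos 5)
Your proof is correct and follows essentially the same route as the paper's: induction on the number of variables, peeling off the leading coefficient with respect to $x_1$ to obtain a set $\HH^\ast$ of $(n-1)$-variate polynomials, and a union bound over the two stages. One small remark: the nonvanishing of $h(x_1,\alpha_2,\ldots,\alpha_n)$ given that its leading coefficient $h^\ast(\alpha_2,\ldots,\alpha_n)$ is nonzero holds over any commutative ring (a polynomial with a nonzero coefficient is nonzero), so the integral-domain hypothesis is not what that particular step relies on, though this does not affect the validity of the argument.
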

\begin{proof}
  The proof is by induction on $n$. When $n=1$, the
  statement holds trivially. So assume $n\ge 2$ and
  also that any $(1,d,m,\mu/(n-1))$-diversifying set is also a 
  $(n-1,d,m,\mu)$-diversifying set.

  We know that $\HH$ is a set of $n$-variate polynomials, each
  with max degree less than $D$. Rewrite each $h$ as a
  polynomial in $x_1$ with coefficients in $\F[x_2,\ldots,x_n]$, and define 
  $\lc(h)$ to be the leading coefficient of $h$ in terms of $x_1$.
  Each $\lc(h)$ is an $(n-1)$-variate polynomial with max degree less
  than $D$. Furthermore, every $\lc(h)$ must be nonzero since $h\ne 0$.

  Now define $\HH' = \{\lc(h) \mid h\in\HH\}$. This is a set of
  $(n-1)$-variate polynomials with degrees less than $D$. Therefore,
  by the induction hypothesis, a random
  evaluation point $(\alpha_2,\ldots,\alpha_n)$ with elements chosen
  from $A$ is a zero for any polynomial in $\HH'$ with probability less
  than $(n-1)\mu/n$.

  Now consider the set
  $\HH'' = \{h(x_1, \alpha_2, \ldots, \alpha_n) \mid h\in\HH\}$, whose
  leading coefficients are all nonzero with probability at least
  $1-\mu/(n-1)$. $\HH'$ is a set of at most $m$ univariate polynomials
  with degrees less than $D$. From the original definition of $A$,
  choosing $\alpha_1$ at random from $A$ makes every
  $h(\alpha_1,\alpha_2,\ldots,\alpha_n)$ nonzero with probability at
  least $1-\mu/n$.

  Therefore a randomly-chosen point $(\alpha_1,\ldots,\alpha_n)\in A^n$
  is a root of any $h\in\HH$ only if $(\alpha_2\ldots,\alpha_n)$ is a
  root of some polynomial in $\HH'$, or $\alpha_1$ is a root of some
  polynomial in $\HH''$. As the probability of each of these is less than
  $(n-1)\mu/n$ and $\mu/n$, respectively, the probability either occurs
  must be less than $\mu$, as required.
\end{proof}

\section{Multivariate interpolation}
\label{sec:interp}

In this section we show how one can interpolate $f$ using randomized
substitutions and a univariate interpolation algorithm. As in
\eqref{eqn:f-mvar}, write
$$
f = a_1\x^{\e_1} + \dots + a_t\x^{\e_t} \in \F[x_1, \dots, x_n].
$$
with known bound $T \geq t$. If $T=1$, then we simply
perform $n$ substitutions $f(z,1,\ldots,1)$, 
\ldots, $f(1,\ldots,1,z)$, each of which reveals the single term and its
exponent in one of the variables. No randomization is necessary in this
case and the solution is trivial.
Therefore for the remainder of this section we assume that
$T\ge 2$.

\subsection{Choosing multiple substitutions}
\label{sec:fam-subs}

The first step in interpolating $f$ is to select $\nu$ randomized
Kronecker substitutions, $\s_1, \ldots, \s_\nu \in \R^n$, where
each $s_i = (s_{i1},\ldots,s_{in})$.  
We require the $\s_i$ to be chosen in such a way that, 
with high probability,
every term of $f$ avoids collision for at least half of
the substitutions $\s_i$.

To achieve this we first randomly select Kronecker substitutions such that any fixed term of $f$ avoids collision for the substitution $\s$ with probability exceeding $3/4$.

For the bivariate case, we would choose primes 
$s_{i1} \in [\lambda_p, 2\lambda_p]$, and 
$s_{i2} \in [\lambda_q, 2\lambda_q]$, for $1 \leq i \leq
\nu$, where $\lambda_p$ and $\lambda_q$ are determined by setting
$\mu=1/4$ in Lemma \ref{lem:lambda}.  

Applying Lemma \ref{lem:lambda} for the general multivariate case, we
would select each integer $s_{ij} \in [0, \lambda-1]$,  
where $\lambda$ is the least
prime greater than $4T/3$.

Given such choices of $\s_i$, the following lemma shows how many
substitutions $\nu$ are required so that every term of $f$ appears
without collisions in at least half of them.

\begin{lemma}\label{lem:nu}
  Let $f\in\F[x_1,\ldots,x_n]$ with max degree less than $D$ and at most
  $T$ nonzero terms.
  Set $$\nu = \max\left(4n, 8\ln\left(10 T\right)\right),$$ and choose $\nu$ vectors
  $\s\in\ZZ^n$ such that, for any single $\s$ and any particular term in
  $f$, the probability that the term collides with another is less than
  $1/4$.

  Then, with probability at least $9/10$, every term of $f$
  collides with no others for at least $2n$ of the substitutions.
\end{lemma}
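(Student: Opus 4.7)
The plan is to bound, for each term $i$ of $f$, the number $X_i$ of substitutions in which that term collides, and then take a union bound over all $T$ terms. First I would fix a term $i$ and define $Y_{ij}$ to be the indicator that term $i$ collides with some other term in the $j$th substitution $\s_j$. Since the $\s_j$ are chosen independently, for each fixed $i$ the variables $Y_{i1},\ldots,Y_{i\nu}$ are independent Bernoulli random variables, and the hypothesis guarantees $\Pr[Y_{ij}=1] < 1/4$. Letting $X_i = \sum_{j=1}^{\nu} Y_{ij}$, we then have $E[X_i] < \nu/4$.

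Next I would translate the desired conclusion into a bound on $X_i$. Term $i$ is collision-free in exactly $\nu - X_i$ substitutions, so for this count to reach $2n$ it suffices to have $X_i \le \nu - 2n$. The choice $\nu \ge 4n$ gives $\nu - 2n \ge \nu/2$, so it further suffices to show $X_i \le \nu/2$. The concentration step is then Hoeffding's inequality for independent $[0,1]$-valued random variables, which yields
$$
  \Pr[X_i \ge \nu/2] \;\le\; \Pr\!\left[X_i - E[X_i] \ge \nu/4\right] \;\le\; \exp(-\nu/8).
$$

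Finally I would apply a union bound over the $T$ terms: the probability that \emph{any} term has $X_i > \nu/2$ is at most $T \exp(-\nu/8)$, and the choice $\nu \ge 8\ln(10T)$ makes this at most $1/10$. Combining the two regimes, $\nu \ge \max(4n,\, 8\ln(10T))$ ensures that with probability at least $9/10$ every term of $f$ appears uncollided in more than $\nu/2 \ge 2n$ of the substitutions.

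There is no real obstacle here; the argument is just a Chernoff/union-bound routine. The only point requiring a bit of care is choosing the right concentration inequality so that the constants match the statement: the multiplicative Chernoff bound $\Pr[X \ge 2\mu] \le (e/4)^\mu$ would yield a coefficient closer to $10\ln(10T)$, whereas Hoeffding --- which effectively exploits the variance bound implicit in $p < 1/4$ --- lands exactly on the $8\ln(10T)$ coefficient in the lemma.
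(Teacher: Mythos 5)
Your proof is correct and takes essentially the same approach as the paper's: apply Hoeffding's inequality to bound the per-term probability of colliding in at least $\nu/2$ of the substitutions by $\exp(-\nu/8) \le 1/(10T)$, then union-bound over the $T$ terms. The paper's proof is just a terser version of the same argument.
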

\begin{proof}
By Hoeffding's
inequality \citet{HOE63} the probability that any fixed term of $f$
collides in a proportion of at least $1/2$ of the substitutions is at
most $\exp(-\nu/8) \le 1/(10 T)$.  Thus the probability is at most
$1/10$ that any term of
$f$ collides in more than $\nu/2\ge 2n$ of the substitutions $\s$.
\end{proof}

\subsection{Choosing a diversifying set}
The next step is to find an appropriate diversifying set for the
interpolation problem.  Note the images 
$$g'_i = f(z^{s_{i1}}, \dots, z^{s_{in}}), 1 \leq i
\leq \nu,$$ 
contain at most $T\nu$ nonzero terms in total.  Every term $cz^d$ from
one of the images $g'_i$, $1 \leq i \leq \nu$, is the image of the
(possibly empty) sum of terms of $f$ of degree $\e$ satisfying
$\e\cdot\s_i = d$.  We take $\mathcal{H}$ to be the set of nonzero
differences of all such sums.
Thus, in order to obtain the
appropriate diversity with some desired probability $1-\epsilon$, we require a
$(n,D,\#\HH,\epsilon)$-diversifying set.  Per Theorem
\ref{thm:mult-diverse}, it suffices that we find a
$(1,D,\#\HH,\epsilon/n)$-diversifying set $A$.

We randomly select $\alvec = (\alpha_1, \dots, \alpha_n)$ from $A^n$ and then use a univariate interpolation algorithm of our choosing in order to construct the set of images
\begin{equation*}
g_i = f(\alpha_1z^{s_{i1}}, \ldots, \alpha_nz^{s_{in}}), \quad 1 \leq i \leq \nu
\end{equation*}
having the property that, with high probability, every
pair of terms $cz^d$ of $g_i$ and $cz^e$ of $g_j$ ($1 \leq i,j \leq
\nu$), sharing a coefficient $c$, are images of the same sum of terms of
$f$.

\begin{lemma}\label{lem:m}
  Let $f\in\F[x_1,\ldots,x_n]$ with max degree less than $D$ and at most
  $T$ nonzero terms.
  Assume $\nu$ substitution vectors are chosen according to
  Lemma~\ref{lem:nu}.

  Set $m=T^2(\nu+2)^2/8$ and choose
  a $(n,D,m,1/10)$-diversifying set $A \subseteq \F$.
  Then, with probability at least $4/5$, any nonzero coefficient $c$
  that appears in at least $\nu/2$ of the substitution polynomials $g_i$
  is the image of a single term in $f$.
\end{lemma}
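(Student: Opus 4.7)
The plan is to combine the collision guarantee of Lemma~\ref{lem:nu} with the coefficient-separation provided by the diversifying set, then finish via a union bound and a short contradiction argument.

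First I would verify that $m$ is large enough to dominate $\#\HH$. Each image $g_i$ has at most $T$ distinct coefficients, and since each collision sum consumes at least two distinct terms of $f$, there are at most $T/2$ collision sums per image, hence at most $T\nu/2$ collision sums in total across the $\nu$ images. Together with the at most $T$ single-term sums, there are at most $T(\nu+2)/2$ distinct sums appearing as coefficients among the $g_i$'s, so the set $\HH$ of their nonzero pairwise differences satisfies $\#\HH < T^2(\nu+2)^2/8 = m$. This counting step is the subtle one: one must exploit the fact that each collision uses at least two terms to bring the collision-sum count from the naive $T\nu$ down to $T\nu/2$, which is exactly what the hypothesis $m=T^2(\nu+2)^2/8$ affords.

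Now I would invoke the two good events and wrap up. By Lemma~\ref{lem:nu} (specifically, the Hoeffding estimate $\exp(-\nu/8)\le 1/(10T)$ used in its proof), with probability at least $9/10$ every term of $f$ collides in strictly fewer than $\nu/2$ of the substitutions; call this event $E_1$. Since $A$ is an $(n,D,m,1/10)$-diversifying set and $\#\HH\le m$, the random $\alvec\in A^n$ is a root of no $h\in\HH$ with probability at least $9/10$; call this event $E_2$. A union bound gives $\Pr[E_1\cap E_2]\ge 4/5$. Conditioning on $E_1\cap E_2$, suppose a nonzero $c$ appears in at least $\nu/2$ of the $g_i$'s. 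By $E_2$, distinct sums evaluate to distinct values at $\alvec$, so every appearance of $c$ must correspond to the same underlying sum $S$ of terms of $f$. If $S$ is a single term we are done; otherwise $S$ contains some term $t_j$ together with at least one other, which forces $t_j$ to collide in all $\ge\nu/2$ substitutions where $c$ appears, contradicting $E_1$. Hence $S$ is necessarily a single term, as claimed.
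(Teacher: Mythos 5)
Your proof is correct and follows essentially the same approach as the paper: bound the number of distinct term-sums appearing as image coefficients so that $\#\HH < m$, combine the collision guarantee of Lemma~\ref{lem:nu} with the diversifying-set property to get both good events with probability at least $4/5$, and conclude that any repeatedly-occurring coefficient must come from a single term. The paper's proof multiplies $(9/10)^2 > 4/5$ rather than taking a union bound, and counts at most $T\nu/4$ collision sums (conditioning on the good event of Lemma~\ref{lem:nu}) rather than your unconditional $T\nu/2$; both counts satisfy the bound $m = T^2(\nu+2)^2/8$, and yours more directly explains where that formula comes from.
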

\begin{proof}
  From the proof of
  Lemma~\ref{lem:nu}, the probability is at least $9/10$ that
  every term in $f$ appears without collision in at least $\nu/2$ of the
  substitutions. 
  
  Assuming this is the case, there can be at most $T\nu/4$ sums of terms
  that collide in any image $g_i$, since each collision involves at
  least two terms, there are at least $T\nu/2$ terms that do not
  collide, and the total number of terms in all images is $T\nu$.

  Hence the set of term differences $\HH$ will consist of the
  differences of any pair in a set of $T + T\nu/4$ polynomials. The
  number of such pairs is less than $m$ given in the statement of the
  lemma.

  From the definition of $A$, the
  probability that any of these polynomials in $\HH$ vanish on
  $\alvec\in A^n$
  is less than $1/10$, so the total probability that each term in $f$
  is uninvolved in collisions in at least $\nu/2$ of the images, and
  all distinct terms and collisions in the image polynomials $g_i$ have
  distinct coefficients, is at least
  $(9/10)^2 > 4/5$.
\end{proof}

A direct consequence is that any fixed subset of terms of $f$ must collide
in fewer than half of the $g_i$. For every nonzero coefficient that
occurs in at least $\nu/2$ of the images $g_i$, 
we know those terms with coefficient $c$ are probably images of the
same fixed term of $f$.

An alternate method might be to allow the
$O(T\nu)$ sums of terms that appear in collisions to sometimes share the
same coefficient, as long as these coefficients are not the same as any
of the $T$ coefficients of actual terms in 
$f(\alpha_1 x_1, \ldots, \alpha_n x_n)$. This would
reduce the $m$ in determining the diversifying set to
$T^2(\nu+2)/4$, a factor of $n$ improvement from the bound above.
The cost of such weakened diversifying sets would
be that some number $\le T/4$ of terms in the final recovered polynomial $h$ are not
actually terms in $f$. By iterating $O(\log T)$ times, such ``garbage
terms'' could be eradicated. 

\subsection{Recovering the multivariate exponents}
\label{sec:lin-sys}

For each coefficient $c$ that appears in at least $\nu/2$ of the images
$g_i$, we attempt to find $n$ linearly independent
substitution vectors, call them 
$\rr_1,\ldots,\rr_n \in \{\s_1,\ldots, \s_\nu\}$,
such that
every substitution polynomial $g_j$ with substitution vector $\rr_j = (r_{j1}, \dots, r_{jn})$, 
for $1\le j \le n$,
 contains the coefficient $c$ in a nonzero term.  
 
In the
bivariate case this is straightforward. Any $2 \times 2$
linear system formed by two
substitution vectors 
$$\begin{bmatrix}s_{11} & s_{12} \\ s_{21} & s_{22}\end{bmatrix}$$
must have nonzero determinant since in the bivariate case the entries
are all distinct prime numbers.

The general multivariate case is more involved, as $n$ substitution
vectors may not always be linearly independent.  For this case we will
randomly select $2n$ vectors $\rr_1,\ldots,\rr_{2n} \in [0,\lambda-1]^n$,
from which
we will search for $n$ linearly independent vectors. To that end we
require a bound on the probability that such $n$ independent vectors
do not exist.

\begin{lemma}\label{lem:linsys}
Let $\lambda$ be a prime number and
$f\in\F[x_1,\ldots,x_n]$,  and suppose that a
term of $f$ avoids collision in an image $f(\x^\s)$ for a randomly
chosen $\s \in [0,\lambda-1]^n$ with probability at least $3/4$.  Let
$\rr_1,
\ldots, \rr_{2n}$ be row vectors, chosen uniformly from
$[0,\lambda-1]^n$.  Given that a term of $f$ avoids collision in the images
$f(\x^{\rr_i})$ for $1 \leq i \leq 2n$, then

$$
Q = \begin{bmatrix} \rr_1 \\ \vdots \\ \rr_{2n} \end{bmatrix}
$$
has rank less than $n$ with probability at most $(9\lambda/16)^{-n}$.
\end{lemma}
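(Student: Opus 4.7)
The plan is to bound the desired conditional probability via a Bayes-style inequality. Let $E$ denote the event that the fixed term of $f$ avoids collision in $f(\x^{\rr_i})$ for every $1\le i\le 2n$, and let $F$ denote the event that $Q$ has rank less than $n$. We want $\Pr[F\mid E]$, and I plan to estimate this by the standard inequality $\Pr[F\mid E]\le \Pr[F]/\Pr[E]$, handling the two factors on the right separately.

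For the denominator, since each $\rr_i$ is drawn independently, the single-term collision events are mutually independent. By hypothesis each one has probability at most $1/4$ of failing, so $\Pr[E]\ge (3/4)^{2n}=(9/16)^n$. This is where the first factor of the target bound comes from.

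For the numerator, I would identify $[0,\lambda-1]^n$ with the vector space $\FF_\lambda^n$ (using that $\lambda$ is prime) so that each $\rr_i$ is uniform in $\FF_\lambda^n$. The event $F$ occurs iff all $2n$ rows of $Q$ lie in some common hyperplane (codimension-$1$ subspace through the origin) of $\FF_\lambda^n$. For any fixed hyperplane $H$, the probability that a uniformly random vector in $\FF_\lambda^n$ lands in $H$ is exactly $\lambda^{-1}$, so by independence $\Pr[\rr_1,\ldots,\rr_{2n}\in H]=\lambda^{-2n}$. The number of hyperplanes through the origin in $\FF_\lambda^n$ is $(\lambda^n-1)/(\lambda-1)$, so the union bound gives
\[
\Pr[F]\;\le\;\frac{\lambda^n-1}{\lambda-1}\cdot\lambda^{-2n}\;\le\;\frac{1}{(\lambda-1)\lambda^n}.
\]

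Combining, $\Pr[F\mid E]\le \Pr[F]/\Pr[E]\le \frac{(16/9)^n}{(\lambda-1)\lambda^n}$, and since $\lambda\ge 2$ the factor $1/(\lambda-1)\le 1$ can be absorbed, yielding $\Pr[F\mid E]\le (16/(9\lambda))^n=(9\lambda/16)^{-n}$, as required. I do not expect a real obstacle here: the independence of the collision events across different $\rr_i$ makes $\Pr[E]$ immediate, and the only mildly delicate point is getting the count of hyperplanes right and recognizing that the slack factor $1/(\lambda-1)$ suffices to swallow the difference between $(\lambda^n-1)/(\lambda-1)$ and $\lambda^{n-1}$ for all primes $\lambda\ge 2$.
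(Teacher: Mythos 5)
Your proof is correct and follows essentially the same approach as the paper's: identify $[0,\lambda-1]^n$ with $\FF_\lambda^n$, bound $\Pr[\mathrm{rank}(Q)<n]$ by a union bound over hyperplanes through the origin, bound $\Pr[E]$ below using independence of the $\rr_i$, and combine via the inequality $\Pr[F\mid E]\le \Pr[F]/\Pr[E]$. The only cosmetic difference is that you use the exact hyperplane count $(\lambda^n-1)/(\lambda-1)$ while the paper uses the cruder bound $\lambda^n$, giving you a tiny extra slack factor of $1/(\lambda-1)$ that is not needed to reach the stated bound.
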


\begin{proof}
Since all entries in each $\s$, and thereby everything in each $\rr$ and
every element in $Q$, is less than $\lambda$, we can consider all these
objects over the finite field $\FF_\lambda$. 

If $Q$ has rank less than $n$, then $\rr_1, \dots \rr_{2n}$  all lie in some
dimension-$(n-1)$ subspace $W \subset \FF_\lambda^n$. 
The number of distinct substitution vectors that could lie in the 
same subspace $W$ is $\lambda^{n-1}$.

Each dimension-$(n-1)$ subspace $W \subset \FF_\lambda^n$ may be specified by a
nonzero vector spanning its orthogonal space, unique up to a scalar
multiple.  Thus the number of such subspaces is less than $\lambda^n$, and
so the number of possible $2n$-tuples comprised of substitution vectors
that do not span $\FF_\lambda^n$ is at most
$$
\lambda^n (\lambda^{n-1})^{2n} = \lambda^{2n^2 - n}.
$$

Meanwhile, there are $\lambda^{2n^2}$ possible $2n$-tuples of
substitution vectors, and so the probability that such a tuple does span
$V$ is at most $\lambda^{-n}$.  Furthermore, by the hypothesis, the probability that a term of $f$ avoids collision for each substitution $\rr_i$ is $(3/4)^{2n}$, and thus the conditional probability that $Q$ is not full rank given that a fixed term of $f$ avoids collision for each $\rr_i$ is at most $\lambda^{-n}/(3/4)^{2n} = (9\lambda/16)^{-n}$.
\end{proof}

Given such a high probability of each term producing a rank-$n$ system
of substitution vectors, it is a simple matter to show that with high
probability \emph{every} term of $f$ admits some such rank-$n$ linear
system of substitutions without collisions.

\begin{cor}
  Let $f\in\F[x_1,\ldots,x_n]$ as above, and set $\nu$ according to 
  Lemma~\ref{lem:nu}, $\alvec$ according to Lemma~\ref{lem:m},
  and $\lambda \geq 3$ according
  to Lemma~\ref{lem:lambda}. With probability at least $2/3$, for
  every term in $f$, there
  exists a rank-$n$ set of substitution vectors $\rr_1,\ldots, \rr_n$, 
  such that the given term of $f$ does not collide in any of the
  substitutions $g = f(\alpha_1 z^{r_{i1}},\ldots \alpha_n z^{r_{in}})$,
  for $1\le i\le n$.
\end{cor}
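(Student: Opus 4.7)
The plan is to combine the three preceding lemmas through a union bound over three bad events. Let $E_1$ be the event that some term of $f$ is collision-free in fewer than $2n$ of the $\nu$ substitutions $\s_1,\ldots,\s_\nu$; let $E_2$ be the event that the diversifying set fails as in Lemma~\ref{lem:m}; and let $E_3$ be the event that, for some term $i$ of $f$, the substitution vectors that are collision-free for term~$i$ all lie in a proper subspace of $\FF_\lambda^n$. If none of $E_1, E_2, E_3$ occur, then for each term of $f$ there are at least $2n$ collision-free substitutions whose span is all of $\FF_\lambda^n$, and one can select $n$ linearly independent vectors from among them to serve as the required $\rr_1, \ldots, \rr_n$.

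Lemma~\ref{lem:nu} bounds $\Pr[E_1]$ by $1/10$ and Lemma~\ref{lem:m} bounds $\Pr[E_2]$ by $1/5$. For $E_3$ I would fix a term $i$ and condition on the random index set $S_i = \{j : \s_j \text{ is collision-free for term } i\}$ taking any specific value $A$ with $|A|\ge 2n$. Because the $\s_j$ are i.i.d.\ uniform on $[0,\lambda-1]^n$ and $S_i$ is defined by pointwise membership in a fixed ``collision-free'' subset, conditional on $S_i = A$ the vectors $(\s_j)_{j\in A}$ are i.i.d.\ uniform on that collision-free set. The calculation from the proof of Lemma~\ref{lem:linsys} then applies essentially verbatim: there are at most $\lambda^n$ proper subspaces of $\FF_\lambda^n$, a uniform collision-free vector lies in any fixed one with probability at most $4/(3\lambda)$, and since $|A|\ge 2n$, raising to the $|A|$th power and union-bounding over subspaces yields $\lambda^n (4/(3\lambda))^{2n} = (16/(9\lambda))^n$.

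Union-bounding over the $t \le T$ terms and using $\lambda = \Omega(T)$ from Lemma~\ref{lem:lambda} makes $\Pr[E_3]$ at most $1/30$ for $n\ge 2$ and $T$ not too small, giving total failure probability at most $1/10+1/5+1/30 = 1/3$. The subtle point, and where I expect the most care is needed, is the conditioning step: Lemma~\ref{lem:linsys} is stated for a deterministic set of exactly $2n$ random vectors, whereas here $S_i$ is itself random. Applying the lemma to each fixed $2n$-subset $B\subset[\nu]$ and then union-bounding over $B$ would incur a $\binom{\nu}{2n}$ blow-up, so the right move is to condition directly on $S_i$ and exploit the fact that $(\s_j)_{j\in S_i}$ are conditionally i.i.d.\ uniform on the collision-free set, which lets the Lemma~\ref{lem:linsys} argument go through unchanged.
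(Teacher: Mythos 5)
Your decomposition into three bad events and a union bound mirrors the paper's strategy, and your care about conditioning on the random set $S_i$ is a genuine improvement over the paper's somewhat implicit appeal to Lemma~\ref{lem:linsys} (the paper just applies it to ``the first $2n$ collision-free substitutions'' without spelling out why the conditional distribution is the right one). Two issues remain. First, Lemma~\ref{lem:m}'s $4/5$ guarantee already incorporates the $9/10$ from Lemma~\ref{lem:nu} (the proof multiplies $(9/10)^2$), so adding $\Pr[E_1]\le 1/10$ separately double-counts; your total of $1/10+1/5+1/30=1/3$ lands exactly on the target only because of this slack in the wrong direction, and the cleaner accounting is $\Pr[E_1\cup E_2]\le 1/5$ plus $\Pr[E_3]$. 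Second, and more substantively, $\Pr[E_3]\le 1/30$ ``for $n\ge 2$'' fails at $n=2$: numerically, with $\lambda\approx 4T$ the bound $T\lambda^2(4/(3\lambda))^4 = 16/(81T)$ exceeds $1/30$ for small $T$; structurally, for $n=2$ the algorithm picks \emph{prime pairs} via Lemma~\ref{lem:lambda-pq}, not uniform vectors in $[0,\lambda-1]^n$, so Lemma~\ref{lem:linsys} does not apply at all. The paper therefore disposes of $n=2$ first: since $s_{i1}$ and $s_{i2}$ are primes from disjoint intervals, any two distinct substitution vectors form a nonsingular $2\times 2$ matrix by unique factorization, so $E_3$ is vacuous there, and the Lemma~\ref{lem:linsys} argument is invoked only for $n\ge 3$. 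You should restrict your $E_3$ bound to $n\ge 3$ and handle $n=2$ by that determinant observation.
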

\begin{proof}
  As we have discussed, the case $T=1$ is trivial and when $n=2$ we
  choose primes for the vectors $\s$ and the $2\times 2$ linear systems
  always have full rank.

  So assume $T\ge 2$ and $n\ge 3$.
  We know from Lemma~\ref{lem:linsys} that
  the probability of a single term \emph{not} admitting a rank-$n$
  system of substitution vectors is less than $(9\lambda/16)^{-n}$,
  so the probability that any term does not have a rank-$n$ system of
  non-colliding substitution vectors is less than
  $T/(9\lambda/16)^n$. 
  
  Since $\lambda$ is chosen as the least odd prime greater than
  $4T/3$, we see that 
  $T/(9\lambda/16)^n \ge \tfrac{4}{3}(9\lambda/16)^{-n+1}$. And because $\lambda\ge 3$
  and $n\ge 3$, $\tfrac{4}{3}(9\lambda/16)^{-n+1} \le \tfrac{4}{3}(27/16)^{-2}<1/6$.

  Combining this with the probability bound from Lemma~\ref{lem:m}, the
  overall success probability is at least 
  $\tfrac{5}{6}\cdot\tfrac{4}{5} = \tfrac{2}{3}$.
\end{proof}

From the set of $2n$ vectors $\rr_1, \dots, \rr_{2n}$ we can find
$n$ linearly independent vectors by inspection of the $LU$ factorization
of the matrix whose $2n$ rows are the $\rr_j$.  By \citet{BH74}, we
can do this in $\softoh{n^\omega}$ operations in $\FF_\lambda$, for a
bit cost of $\softoh{n^\omega\log T}$.  We suppose by reordering
of the $\rr_j$ that $\rr_1, \dots, \rr_n$ are our $n$
linearly independent vectors.

Then, if $d_j$ is the exponent of the term with coefficient $c$ appearing in 
$f(\alpha_1z^{r_{j1}}, \ldots, \alpha_nz^{r_{jn}})$,
we may find the degree $\e$ of the term with coefficient $c$
in the diversified multivariate polynomial $f(\alpha_1x_1, \dots, \alpha_nx_n)$
by way of the linear system
\begin{equation*}
\begin{bmatrix} \rr_1 \\ \vdots \\ \rr_n \end{bmatrix}\begin{bmatrix} e_1 \\ \vdots \\ e_n \end{bmatrix} = \begin{bmatrix} d_1 \\ \vdots \\ d_n \end{bmatrix}.
\end{equation*}

We construct and solve such a linear system for every term of $f$,
giving us the polynomial
$$g=f(\alpha_1x_1, \dots, \alpha_nx_n),$$ from which we
easily obtain $f$ as $f=g(\alpha_1^{-1}x_1, \dots, \alpha_n^{-n}x_n)$.  Procedure \ref{proc:interp} describes the approach laid out in sections \ref{sec:fam-subs}-\ref{sec:lin-sys}.

\begin{procedure}[ht!]
\caption{Interpolate($f, n, T, D$) }\label{proc:interp}
\label{proc:FindPrimes}
\KwIn{Bounds $D, T$ and a black box for evaluating 
$f \in \F[x_1,\dots,x_n]$, an unknown polynomial with 
partial degrees less than $D$ and at most $T$ nonzero terms.  
}
\KwResult{
We construct $f$ with probability at least $2/3$.
}
\medskip

\tcp{Choose substitution vectors}

$\nu \gets \max(4n, 8\ln(10T))$ \;
\If{$n=2$}{
  $\s_1,\ldots,\s_\nu \gets$ prime vectors chosen by
  Lemma~\ref{lem:lambda-pq}\;
}\Else{
  $\s_1\ldots,\s_\nu \gets$ integer vectors chosen by
  Lemma~\ref{lem:lambda}\;
}
\medskip

\tcp{Diversify}
$m \gets T^2(\nu + 2)^2/8$\;
$A \gets$ a $(n,D,m,1/10)$-diversifying subset of $\F$\;
$\alvec \gets$ an element of $A^n$ chosen uniformly at random\;
\medskip

\tcp{Build images of $f$}
\For{$i=1,\ldots, \nu$}{
	$g_i \leftarrow f(\alpha_1 z^{s_{i1}}, \dots, \alpha_n z^{s_{in}})$
    via univariate\\
    \hspace*{7mm} interpolation
}\medskip

\tcp{Reconstruct $f$ from its images}
$g \gets 0$\;
\ForEach{coefficient $c \neq 0$ appearing in any $g_i$}{
  $S\gets \{\s_i \mid c\text{ is a coefficient in } g_i\}$\;
  \lIf{$\#S < \nu/2$}{{\bf continue}}
  $\rr_1,\ldots, \rr_n \gets$ linearly independent vectors chosen\\
  \hspace*{18 mm} from the first $2n$ vectors in $S$\;
  $d_1,\ldots, d_n \gets$ degrees of terms with coefficient $c$\\
  \hspace*{19mm}under substitutions $\rr_1, \dots, \rr_n$\;
  $R \gets (\rr_1\ \ldots\ \rr_n)^\top$\;
  $\dd \gets (d_1 \dots d_n)^\top$\;
  $\e \gets R^{-1}\dd$\;
  $g \gets g + c\x^\e$
}
\KwRet{$g(\alpha_1^{-1}x_1, \ldots, \alpha_n^{-1}x_n)$}

\end{procedure}

\subsection{Cost analysis}

We can now state the tangible benefit of our new Kronecker substitution
technique. The diversifying sets are included in these theorems even
though we do not actually count the cost of possibly extending the ring
$\R$ to include such sets.

\begin{theorem}
  For given bounds $D_x,D_y,T$
  and an unknown polynomial $f\in\R[x,y]$ with
  $\deg_x f < D_x$, $\deg_y f < D_y$, and $\#f \le T$,
  Procedure~\ref{proc:interp} succeeds in finding $f$ with
  probability at least $2/3$ and requires
  \begin{compactitem}
    \item $O(\log T)$ calls to univariate interpolation with
      $T$ nonzero terms and degree
      $\oh{\sqrt{T}\sqrt{D_x D_y}\log(D_x D_y)}$,
    \item A $(2,D,\oh{T^2\log^2 T),1/10}$-diversifying set in $\R$, and
    \item $\softoh{T\log D + \log^2 D}$ additional bit operations,
      where $D = \max(D_x,D_y)$.
  \end{compactitem}
\end{theorem}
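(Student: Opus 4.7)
The plan is to verify each of the three quantitative claims by specializing the general analysis of Sections~\ref{sec:subs}--\ref{sec:interp} to $n=2$. The key simplification in the bivariate case is that Lemma~\ref{lem:linsys} is not needed: any two substitution vectors $(p_1,q_1),(p_2,q_2)$ whose entries are primes chosen as in Lemma~\ref{lem:lambda-pq} automatically give an invertible $2\times 2$ system, since $p_1q_2-p_2q_1\ne 0$ by unique factorization (the ranges $[\lambda_p,2\lambda_p]$ and $[\lambda_q,2\lambda_q]$ are treated separately per column, so after excluding the degenerate case of identical rows, the $2\times 2$ determinant is nonzero).

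For the success probability, only two failure modes remain: (i) some term of $f$ collides in more than $\nu/2$ substitutions, and (ii) the diversifying evaluation fails to separate distinct terms and sums-of-collisions by coefficient. These are bounded by $1/10$ and $1/10$ respectively via Lemmas~\ref{lem:nu} and~\ref{lem:m}, and a union bound gives total failure probability at most $1/5$, hence success at least $4/5>2/3$. For the complexity counts, plugging $n=2$ into $\nu=\max(4n,8\ln(10T))$ yields $\nu=O(\log T)$, so Procedure~\ref{proc:interp} makes $O(\log T)$ univariate interpolation calls. Each call is on $g_i(z)=f(\alpha_1 z^{s_{i1}},\alpha_2 z^{s_{i2}})$, which has the same exponent set (and hence the same sparsity bound $T$ and the same degree) as $f(z^{s_{i1}},z^{s_{i2}})$, since the $\alpha_i$ only rescale coefficients. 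Corollary~\ref{cor:deg-bivar} then pins the degree at $\oh{\sqrt{T}\sqrt{D_xD_y}\log(D_xD_y)}$, and the diversifying-set parameter $m=T^2(\nu+2)^2/8=\oh{T^2\log^2 T}$ is read directly from the setting in Procedure~\ref{proc:interp}.

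The main obstacle is the remaining bit-operation accounting, which I would organize into three pieces: (a) primality testing to generate $\nu$ prime pairs of size $\oh{\sqrt{TD_xD_y}}$, contributing $\softoh{\log^2 D}$ bits overall; (b) grouping the $T\nu$ image-coefficients across the $\nu$ univariate polynomials to identify which coefficients appear in at least $\nu/2$ images, which over exponents of bit length $\oh{\log T+\log D}$ contributes $\softoh{T\log D}$; and (c) solving one $2\times 2$ integer linear system per distinct coefficient (at most $T$ of them), each at cost $\softoh{\log^2 D}$. The $\softoh{}$ notation absorbs the resulting $T\log^2 D$ term into $\softoh{T\log D}$, and the final sum gives the claimed $\softoh{T\log D+\log^2 D}$ bit-operation bound. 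The only nontrivial step here is verifying that the grouping in (b) can be done without charging field operations (coefficients of $\R$ are opaque hashable tokens for sorting purposes), so that only the integer exponent arithmetic enters the bit count.
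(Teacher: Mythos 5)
Your proposal is correct and follows essentially the same route as the paper: plug $n=2$ into $\nu$ and $m$, invoke Corollary~\ref{cor:deg-bivar} for the degree bound, dispense with Lemma~\ref{lem:linsys} because distinct prime pairs give invertible $2\times 2$ systems, and combine Lemmas~\ref{lem:nu} and~\ref{lem:m} for the success probability. The paper's own proof is merely terser on the bit-complexity accounting (it lumps the exponent manipulations and linear solves into a single ``$\softoh{T}$ exponents of $O(\log D)$ bits'' tally, and isolates the Miller--Rabin cost of generating $2\nu$ primes as $\softoh{\log T\log^2 D}$), but your breakdown reconstructs the same two dominant terms $\softoh{T\log D}$ and $\softoh{\log^2 D}$.
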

\begin{proof}
  Since $n=2$, we have $\nu \in \oh{\log T}$.
  This is the number of calls to the univariate interpolation algorithm,
  and the degree bound comes from Corollary~\ref{cor:deg-bivar}.
  The size of $\HH$ in the diversifying set comes from the fact that
  $m \in \oh{T^2\log^2 T}$.

  Two steps dominate the bit complexity. First, we must choose $2\nu$
  primes in $[\lambda,2\lambda]$. This can be accomplished via
  $O(\nu)$ applications of the Miller-Rabin primality test, performing
  $O(\log\log T)$ trials each time to ensure a negligible probability
  of error. The cost of these tests is $\softoh{\log T\log^2\lambda}$,
  which is $\softoh{\log T \log^2 D}$.

  The other dominating step in bit complexity is simply the cost of
  computing with the $T\nu$ exponents in images $g_i$ and $T$ exponent
  vectors in the final result. There are $\softoh{T}$ such exponents,
  each with $O(\log D)$ bits, for a total bit cost of 
  $\softoh{T\log D}$.
\end{proof}

\begin{theorem}
  For given bounds $D,T$
  and an unknown polynomial $f\in\R[x_1,\ldots,x_n]$ with
  $\max\deg g < D$ and $\#f \le T$,
  Procedure~\ref{proc:interp} succeeds in finding $f$ with
  probability at least $2/3$ and requires
  \begin{compactitem}
    \item $O(n + \log T)$ calls to univariate interpolation with
      $T$ nonzero terms and degree $\oh{TD}$,
    \item A $(2,D,\softoh{T^2n^2},1/10)$-diversifying set in $\R$, and
    \item $\softoh{n^\omega T + nT\log D}$ additional bit
    operations, where $2 < \omega < 3$ is the exponent of matrix
    multiplication.
  \end{compactitem}
\end{theorem}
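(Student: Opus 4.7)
The plan is to mirror the bivariate-theorem argument just completed, reading each of the four claims directly off the parameter choices in Procedure~\ref{proc:interp} and the lemmas already established in Sections~\ref{sec:subs}--\ref{sec:interp}. Nothing genuinely new is required; the theorem is essentially a bookkeeping exercise, so the main obstacle is simply verifying that no hidden contribution exceeds the stated bit-cost bound.

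First I would dispose of the success probability. The corollary at the end of Section~\ref{sec:lin-sys} already guarantees that, with probability at least $2/3$, every term of $f$ both (i)~appears without collision in at least $\nu/2$ of the images $g_i$, and (ii)~admits $n$ linearly independent non-colliding substitution vectors among its first $2n$ available vectors. Once such vectors are extracted by the $LU$ step, the $n\times n$ linear solve that recovers the corresponding exponent tuple is deterministic, so the entire procedure succeeds with probability at least $2/3$, giving the first claim.

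For the next two bullets, I would just inspect Procedure~\ref{proc:interp}. It sets $\nu=\max(4n,8\ln(10T))\in O(n+\log T)$, which is exactly the number of images $g_i$ built by univariate interpolation. Lemma~\ref{lem:lambda} with $\mu=1/4$ forces each substitution exponent to be bounded by a prime $\lambda<8T/3$, so the corollary following Lemma~\ref{lem:lambda} yields $\deg g_i\in\oh{TD}$. For the diversifying-set parameter, the choice $m=T^2(\nu+2)^2/8$ together with $\nu\in O(n+\log T)$ gives $m\in\softoh{T^2n^2}$ after absorbing the $\log^2 T$ factor into $\softoh{}$.

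The one place where attention is required is the bit complexity, and this is where I expect the main difficulty to lie. Two steps dominate. For each of the at most $T$ candidate coefficients $c$, we form a $2n\times n$ matrix over $\FF_\lambda$ and find $n$ linearly independent rows via $LU$; by \citet{BH74} this costs $\softoh{n^\omega}$ operations in $\FF_\lambda$, and since $\lambda\in O(T)$ each such operation is a $\softoh{\log T}$-bit operation, so summing over the $T$ coefficients gives $\softoh{n^\omega T}$ bit operations (also absorbing the subsequent $n\times n$ back-substitution). The second dominant contribution is the manipulation of the $\Theta(T\nu)=\softoh{nT}$ univariate exponents stored across all images together with the $nT$ multivariate exponent coordinates in the output, each of bit-size $O(\log(TD))=O(\log T+\log D)$, for a total of $\softoh{nT\log D}$. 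Every other cost---finding the prime $\lambda$ by $O(\log\log T)$ Miller--Rabin trials on $O(\log T)$-bit integers, iterating over coefficients, and evaluating the final $g(\alpha_1^{-1}x_1,\ldots,\alpha_n^{-1}x_n)$---is negligible in comparison, so the overall cost is $\softoh{n^\omega T+nT\log D}$ as claimed. The only subtle point is confirming that neither the per-coefficient loop nor the hidden cost of field arithmetic inflates this bound, which a careful but routine accounting resolves.
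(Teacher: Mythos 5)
Your proposal is correct and follows essentially the same route as the paper's own proof: read the probability bound from the corollary in Section~\ref{sec:lin-sys}, read $\nu\in O(n+\log T)$ and $\deg g_i\in O(TD)$ and $m\in\softoh{T^2n^2}$ from the parameter choices, and identify the $T$ LU factorizations over $\FF_\lambda$ plus the $\softoh{nT}$ exponents of $O(\log D)$ bits as the dominant bit-cost terms. The only cosmetic difference is that you explicitly account for finding the single prime $\lambda$ by Miller--Rabin and then dismiss it, whereas the paper simply notes primality testing is not a concern in this case.
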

\begin{proof}
  The analysis of the first two parts is the same as in the bivariate
  case.

  For the bit complexity, we do not have to worry about primality
  testing here. However, the size of all exponents in the polynomials
  becomes $\softoh{nT\log D}$, and the cost of performing each LU
  factorization on a $(2n)\times n$ matrix is $O(n^\omega)$ operations
  on integers with $O(\log T)$ bits. As $T$ such LU factorizations are
  required, the total bit cost of the linear algebra is
  $\softoh{n^\omega T}$.
\end{proof}

\subsection{Interpolating $f$ with arbitrarily high probability}

Interpolating $f$ entails the probabilistic steps of (1) selecting a set
of randomized substitutions that produce few collisions
and (2) selecting $\alvec$ from a diversifying set $A$ such that 
all term sums in all images have distinct coefficients in those
images.
The $n\ge 3$ case has in addition the
probabilistic step (3) of guaranteeing that we can construct a full-rank
linear system in order to solve for every exponent of $f$.  
The probability of failure in each of these steps has been controlled
above so that the overall success probability is at least $2/3$.

If a higher success probability, say $1-\epsilon$, is desired, 
we simply run the interpolation algorithm described in sections
\ref{sec:fam-subs}--\ref{sec:lin-sys} with some $\ell$
times.  Again using Hoeffding's inequality, the probability that the
algorithm fails at least $\ell/2$ times is at most 
$\exp(-2\ell(1/6)^2) = \exp(-\ell/18)$. Thus, if we wish
to discover $f$ with probability $1-\epsilon$, we
merely run the algorithm as suggested some $\ell=\lceil 18\ln
\tfrac{1}{\epsilon}\rceil$ times, and select the 
the polynomial $f$ that is returned a majority of the time.
With probability at least $1-\epsilon$, such an $f$ exists and is in
fact the correct answer.

\section{Perspective}\label{sec:persp}

We have presented a new randomization that maps a multivariate
polynomial to a univariate polynomial with (mostly) the same terms. This
improves on the usual Kronecker map by reducing the degree of the
univariate image when the polynomial is known to be sparse. We have also
shown how a small number of such images can be combined to recover the
original terms of the unknown multivariate polynomial.

There are numerous questions raised by this result. Perhaps foremost is
whether there is any practical gain in any particular application by
using this approach. We know that the randomized Kronecker substitution
will result in smaller degrees than the usual Kronecker substitution
whenever the polynomial is sufficiently large and sufficiently sparse,
so in principle the applications should include any of the numerous
results on sparse polynomials that use a Kronecker substitution to
accommodate multivariate polynomials.

Unfortunately, in practice, the situation is not so clear. Many of the
aforementioned results that rely on a Kronecker substitution either do
not have a widely-available implementation, or do not usually involve
sparse polynomials. However, for the particular applications of sparse
GCD and sparse multivariate multiplication, there is considerable
promise particularly in the case of bivariate polynomials with degree
greater than 1000 or so and sparsity between $D$ and $D^2$. An efficient
implementation comparison in these situations would be useful and
interesting, and we are working in that direction.

There are also questions of theoretical interest. For one, we would
like to know
how far off the bounds on the size of primes from Lemmata
\ref{lem:lambda-pq} and \ref{lem:lambda} are compared to what is really
necessary to avoid collisions. 

An important question is whether our current results are optimal in any sense.
In the bivariate case, 
when $D_x = D_y$ our result gives $p,q \in \softoh{\sqrt{T}}$, which is
optimal in terms of $T$. That is because $T$ could be as large as $\Theta(D^n)$,
and therefore any monomial substitution exponent less than 
$\Omega(T^{(n-1)/n})$ would by necessity have more than a constant fraction
of collisions.
However our result for $n\ge 2$ gives each
$p_i\in \softoh{T + n^2 + \log^2 D}$, which in terms of $T$ is off by a
factor of $T^{1/n}$ from the optimal. It may be possible to improve
these bounds simply with a better analysis, or with a different kind of
randomized monomial substitution. In either case, it is clear that, at
least for $n\ge 3$ and in particular for $n=3$, it should be possible to
improve on the results here and achieve univariate reduced polynomials
with even lower degree.

Another interesting question would be whether some of this randomization can be
avoided. Here we have two randomizations, the diversification and the (multiple)
randomized Kronecker substitutions. And this is besides any randomization that
might occur in the underlying univariate algorithm! It seems plausible that, for
example in the application of multivariate multiplication, the known aspects of
the monomial structure might be used to make some choices less random and more
``intelligent''. However, we do not yet know any reasonable way to accomplish
this.

\section{Acknowledgements}

We wish to thank Zeev Dvir for pointing out the previous work of
Klivans and Spielman, and the reviewers for their helpful comments.
We also thank the organizers of the SIAM AG13 meeting 
for the opportunity to discuss preliminary work on this topic.
The second author is supported by the National Science Foundation,
award \#1319994.

A version of this paper will appear at ISSAC 2014 in Kobe, Japan.

\newcommand{\Gathen}{\relax}\newcommand{\Hoeven}{\relax}

\end{document}